\newtheorem{puzzle}{Puzzle}
\newcommand{\seq}{\{0, 1\}^{\omega}}
\DeclareMathOperator{\DP}{DP}
\DeclareMathOperator{\success}{Suc}
\DeclareMathOperator{\poly}{poly}
\newtheorem{theorem}{Theorem}
\newtheorem{claim}{Claim}
\newtheorem{lemma}{Lemma}
\renewcommand{\mathbf}[1]{{\pmb #1}}
\newcommand{\eps}{\varepsilon}
\newcommand{\cS}{{\cal S}}
\newcommand{\cT}{{\cal T}}
\newcommand{\bE}{\mathbb{E}}
\newcommand{\bN}{\mathbb{N}}
\begin{document}

\title{High-Confidence Predictions under Adversarial Uncertainty}

\author{Andrew Drucker\thanks{MIT, EECS Dept., Cambridge, MA, USA.\  Email: adrucker@mit.edu.\ Supported by a DARPA YFA grant.}}

\date{}

\maketitle

\begin{abstract}
We study the setting in which the bits of an unknown infinite binary sequence $x$ are revealed sequentially to an observer.  We show that very limited assumptions about $x$ allow one to make successful predictions about unseen bits of $x$.  First, we study the problem of successfully predicting a single 0 from among the bits of $x$.  In our model we have only one chance to make a prediction, but may do so at a time of our choosing.  We describe and motivate this as the problem of a frog who wants to cross a road safely.

Letting $N_t$ denote the number of 1s among the first $t$ bits of $x$, we say that $x$ is ``$\eps$-weakly sparse'' if $\lim \inf (N_t/t) \leq \eps$.  Our main result is a randomized algorithm that, given any $\eps$-weakly sparse sequence $x$, predicts a 0 of $x$ with success probability as close as desired to $1 - \eps$.  Thus we can perform this task with essentially the same success probability as under the much stronger assumption that each bit of $x$ takes the value 1 independently with probability $\eps$.
We apply this result to show how to successfully predict a bit (0 or 1) under a broad class of possible assumptions on the sequence $x$.  The assumptions are stated in terms of the behavior of a finite automaton $M$ reading the bits of $x$.

We also propose and solve a variant of the well-studied ``ignorant forecasting'' problem.  For every $\eps > 0$, we give a randomized forecasting algorithm $\mathcal{S}_{\eps}$ that, given sequential access to a binary sequence $x$, makes a prediction of the form: ``A $p$ fraction of the next $N$ bits will be 1s.''  (The algorithm gets to choose $p, N$, and the time of the prediction.)   For any fixed sequence $x$, the forecast fraction $p$ is accurate to within $\pm \eps$ with probability $1 - \eps$.

\end{abstract}


\section{Introduction}\label{introsec}


\subsection{The frog crossing problem}

A frog wants to cross the road at some fixed location, to get to a nice pond.  But she is concerned about cars.  It takes her a minute to cross the road, and if a car passes during that time, she will be squashed.  However, this is no ordinary frog.  She is extremely patient, and happy to wait any finite number of steps to cross the road.  What's more, she can observe and remember how many cars have passed, as well as when they passed.  She can follow any algorithm to determine when to cross the road based on what she has seen so far, although her senses aren't keen enough to detect a car before it arrives.

 Think of a ``car-stream'' as defined by an infinite sequence of 0s and 1s describing the minutes when a car passes (we model time as discrete, and assume that at most one car passes each minute).
We ask, under what assumptions on the car-stream can our frog cross the road safely?  Obviously, if there is constant, bumper-to-bumper traffic---the car-stream described by the sequence $(1, 1, 1, \ldots)$---then she cannot succeed, so we must make some assumption.

One natural approach to this kind of situation is to assume traffic is generated according to some \emph{probabilistic model}.  For example, we might assume that during each minute, a car arrives with probability $.1$, and that these events are independent.  More complicated assumptions (involving dependence between the car arrival-times, for example) can also be considered.

However, the frog may not have a detailed idea of how the cars are generated.  It may be that the frog merely knows or conjectures some \emph{constraint} obeyed by the car-stream.
We then ask whether there exists a strategy which gets the frog safely across the road (at least, with sufficiently high probability), for \emph{any} car-stream obeying the constraint.   This model will be our focus in the present paper.
For example, suppose the cars appear to arrive more-or-less independently with probability $.1$ at each minute.  The frog may be unsure that the independence assumption is fully justified, so she may make the weaker assumption that the limiting car-density is at most $.1$.

Note that this condition holds with probability 1 if the cars really are generated by independent $.1$-biased trials, so this constraint can be considered a natural \emph{relaxation} of the original probabilistic model.
The frog may then ask whether there exists a strategy that gets her across the road safely with probability nearly $.9$ under this relaxed assumption.  (Happily, the answer is Yes; this will follow from our main result.)

\subsection{Relation to previous work}\label{relatedsec}

Our work studies \emph{prediction under adversarial uncertainty}.  In such problems, an observer tries to make predictions about successive states of nature, without assuming that these states are governed by some known probability distribution.  Instead, nature is regarded as an \emph{adversary} who makes choices in an attempt to thwart the observer's prediction strategy.
 The focus is on understanding what kinds of predictions can be made under very limited assumptions about the behavior of nature.  

Adversarial prediction is a broad topic, but two strands of research are particularly related to our work.  The first strand is the study of \emph{gales} and their relatives.  Gales are a class of betting systems generalizing martingales; their study is fundamental for the theory of \emph{effective dimension} in theoretical computer science (see~\cite{fgcc} for a survey).  The basic idea is as follows.  An infinite sequence $x$ is chosen from some known subset $A$ of the space $\{0, 1\}^{\omega}$ of infinite binary sequences.  A gambler is invited to gamble on predicting the bits of $x$ as they are sequentially revealed; the gambler has a finite initial fortune and cannot go into debt.  The basic question is, for which subsets $A$ can the gambler be guaranteed long-term success in gambling, for any choice of $x \in A$?  This question can be studied under different meanings of ``success'' for the gambler, and under more- or less-favorable classes of bets offered by the casino.  

Intuitively, the difficulty of gambling successfully on an unknown $x \in A$ is a measure of the ``largeness'' of the set $A$.  In fact, this perspective was shown to yield new characterizations of two important measures of fractal dimensionality.  Lutz~\cite{lutz03} gave a characterization of the Hausdorff dimension of subsets of $\{0, 1\}^{\omega}$ in terms of gales, while Athreya, Hitchcock, Lutz, and Mayordomo~\cite{ath07} showed a gale characterization of the packing dimension.  These works also investigated gales with a requirement that the gambler follows a computationally bounded betting strategy; using such gales, the authors explored new notions of ``effective dimension'' for complexity classes in computational complexity theory.\footnote{Computationally bounded betting and prediction schemes have also been used to study \emph{individual} sequences $x$, rather than sets of sequences.  This approach has been followed using various resource bounds and measures of predictive success; see, e.g.,~\cite{mer98, lutz_indiv}.}

The second strand of related work is the so-called \emph{forecasting problem} in decision theory (see \cite{dawid} for an early, influential discussion).  In this problem, an infinite binary sequence $x \in \{0,1\}^\omega$ is once again revealed sequentially; we typically think of the $t$-th bit as indicating whether it rained on the $t$-th day at some location of interest.  Each day a weather forecaster is asked to give, not an absolute prediction of whether it will rain tomorrow, but instead some estimate of the \emph{probability} of rain tomorrow.  In order to keep his job as the local weather reporter, the forecaster is expected to make forecasts which have the property of being \emph{calibrated}: roughly speaking, this means that if we consider all the days for which the forecaster predicted some probability $p$ of rain, about a $p$ fraction turn out rainy (see~\cite{fostv98} for more precise definitions).

In the adversarial setting, a forecaster must make such forecasts without knowledge of the probability distribution governing nature.  In the well-studied ``ignorant forecaster'' model, the forecaster is allowed \emph{no assumptions whatsoever} about the sequence $x$.  Nevertheless, it is a remarkable fact, shown by Foster and Vohra~\cite{fostv98}, that there exists a randomized ignorant forecasting scheme whose forecasts are calibrated in the limit.  

This result was extended by Sandroni~\cite{sand03}.  The calibration criterion is just one of many conceivable ``tests'' with which we might judge a forecaster's knowledge on the basis of his forecasts and the observed outcomes.  Foster and Vohra's result showed that the calibration test can be passed even by an ignorant forecaster; but conceivably some other test of knowledge could be more meaningful.  A reasonable class of tests to consider are those that can be passed with some high probability $1 - \eps$ by a forecaster who knows the actual distribution $\mathcal{D}$ governing nature, for any possible setting of $\mathcal{D}$.
However, Sandroni showed that any such test can \emph{also} be passed with probability $1 - \eps$ by an ignorant forecaster!  Fortnow and Vohra~\cite{fortv09} give evidence that the ignorant strategies provided by Sandroni's result cannot in general be computed in polynomial time, even if the test is polynomial-time computable.\footnote{The tests considered in~\cite{sand03, fortv09} are required to halt with an answer in finite time.  See~\cite{fortv09} for references to work in which this restriction is relaxed.}

In both of the strands of research described above, researchers have typically looked for prediction schemes that have some desirable \emph{long-term, aggregate} property.  In the gale setting, the focus is on betting strategies that may lose money on certain bets, but that succeed in the limit; in the forecasting problem, an ignorant forecaster wants his forecasts to appear competent overall, but is not required to give definite predictions of whether or not it will rain on any given day.
By contrast, in our frog problem, the frog wants to cross the road just once, and her life depends on the outcome.  Our focus is on making a \emph{single prediction}, with success probability as close to 1 as possible.  

In a later section of the paper we will also study a variant of the ignorant forecasting scenario.  Following~\cite{fostv98, sand03}, we will make no assumption about the observation sequence $x$.  Our goal will be to make a single forecast at a time of our choosing, of the following form: ``A $p$ fraction of the next $N$ observations will take the value 1.''  We will seek to maximize the accuracy of our prediction, as well as the likelihood of falling within the desired accuracy.  This forecasting variant is conceptually linked to our frog problem by its focus on making a single prediction with high confidence.

\subsection{Our results on the frog crossing problem}\label{puzsec}

We now return to our patient frog.

To appreciate the kinds of frog-strategies that are possible, we first consider a simple but instructive example.  Suppose the frog knows that at most \emph{one} car will ever drive by.  In this case, the frog might choose to wait until she sees a car pass; however, this strategy makes her wait forever if no car ever arrives, and we consider this a failure.  Similarly, suppose the frog follows a deterministic strategy which, for some $t \geq 1$, makes her cross on the $t$-th minute if she has not yet seen a car.  Then the frog is squashed on the car-stream consisting of a single car passing at the $t$-th minute.  Thus, any deterministic strategy fails against some car-stream obeying our constraint.

What is left for the frog?   We recommend following a \emph{randomized} strategy.  Fixing  some $\delta > 0$, consider the following strategy: the frog  chooses a value $t^\star \in \{1, 2, \ldots, \lceil 1/\delta \rceil\}$ uniformly at random, and crosses at time $t^\star$.  Let's analyze this algorithm.  Fix any car-stream consisting of at most one car, say arriving at time $t \geq 1$ (where $t := \infty$ if no car arrives).  Then the strategy above fails only if $t^\star = t$, which occurs with probability at most $ \lceil 1/\delta \rceil^{-1} \leq \delta$.

Note that this error probability is over the randomness in the \emph{algorithm}, not the car-stream; we regard the car-stream as chosen by an adversary who knows the frog's strategy, but \emph{not} the outcomes of the frog's random decisions.  We are interested in strategies which succeed with high probability against any choice by the adversary (obeying the assumed constraint).

An easy modification of the above algorithm lets the frog succeed with probability $1 - \delta$ against a car-stream promised to contain at most $M$ cars, for any \emph{fixed} $M < \infty$.  However, it may come as a surprise that we can succeed given a much weaker assumption.  The reader is invited to try the following puzzle:

\begin{puzzle}\label{puz1} For any $\delta > 0$, give a frog-strategy that succeeds with probability $1 - \delta$, under the assumption that the number of cars is finite.
\end{puzzle}

The assumption can be weakened further.  Fixing a car-stream, let $N_t$ denote the number of cars appearing in the first $t$ minutes.  Say that the car-stream is \emph{sparse} if $N_t = o(t)$, that is, if
\[\lim_{t \rightarrow \infty }  N_t/t   = 0.     \]
\begin{puzzle}\label{puz2} Give a frog-strategy that succeeds with probability $1 - \delta$, under the assumption that the car-stream is sparse.
\end{puzzle}

Note that in Puzzle~\ref{puz2}, the frog is promised that the fraction $N_t/t$ approaches 0 as $t \rightarrow \infty$, but she has no idea how quickly it will do so.

Say the car-stream is \emph{weakly sparse} if $N_t \notin \Omega(t)$, that is, if
\[\lim_{s \rightarrow \infty }  \left( \inf_{t \geq s}  N_t/t  \right)  = 0.     \]

\begin{puzzle}\label{puz3} Give a frog-strategy that succeeds with probability $1 - \delta$, under the assumption that the car-stream is weakly sparse.
\end{puzzle}

In this paper we provide a solution to Puzzle~\ref{puz3}.  This immediately implies a solution for Puzzles~\ref{puz1} and~\ref{puz2}, but these first two puzzles also have simpler solutions, which we encourage the reader to find.  
The basic idea of our solution to Puzzle~\ref{puz3} is easy to state: roughly speaking, seeing fewer cars increases the frog's ``courage'' and makes her more likely to decide to cross.  Correctly implementing and analyzing this idea turns out to be a delicate task, however.

We actually prove a quantitative strengthening of Puzzle~\ref{puz3}.  For any $\eps> 0$, say that a car-stream is \emph{$\eps$-weakly sparse} if
\[\lim_{s \rightarrow \infty }  \left( \inf_{t \geq s}  N_t/t  \right)  \leq \eps.     \]
Our main result is that, under the assumption that the car-stream is $\eps$-weakly sparse, the frog can cross successfully with probability as close as desired to $1 - \eps$.   We state our result formally in Section~\ref{prelimsec} after setting up the necessary definitions.

Our result bears some resemblance to known results in dimension theory.   Let $A_{\eps-ws} \subseteq \seq$ denote the set of $\eps$-weakly sparse infinite binary sequences.  Eggleston~\cite{egg, ergo} showed that for $\eps \leq 1/2$, the Hausdorff dimension of $A_{\eps-ws}$ is equal to the binary entropy $H(\eps)$.  More recently, Lutz~\cite{lutz03} gave an alternative proof using his gale characterization of Hausdorff dimension (Lutz also calculated the ``effective dimension'' of $A_{\eps-ws}$ according to several definitions).  Lutz upper-bounds the Hausdorff dimension of $A_{\eps-ws}$ by giving a gale betting strategy that ``succeeds'' (in the appropriate sense) against all $x \in A_{\eps-ws}$.  This betting strategy, which is simple and elegant, does not appear to be applicable to our puzzles.  Indeed, a major difference between our work and the study of gales is that gale betting strategies are \emph{deterministic} (at least under standard definitions~\cite{lutz03, ath07}), whereas randomization plays a crucial role in our frog-strategies.

\subsection{Further results}\label{sec:further}

 In Section~\ref{automsec}, we prove an extension of the result of Puzzle~\ref{puz3}, in a a modified setting in which we are allowed to predict either a 0 or a 1.  We give a condition on the binary sequence $x$ that is significantly more general than weak sparsity, and that still allows a bit to be predicted with high confidence.   The condition is stated in terms of a finite automaton $M$ that reads $x$: we assume that $x$ causes $M$ to enter a designated set of ``bad'' states $B$ only infrequently.  A certain ``strong accessibility'' assumption on the states $B$ is needed for our result.
 
 In Section~\ref{forecastsec}, we study a problem closely related to the ``ignorant forecasting'' problem discussed earlier, where (as in the frog problem) a single prediction is to be made.  In the ``density prediction game,'' an arbitrary infinite binary sequence is chosen by Nature, and its bits are revealed to us sequentially.  Our goal is to make a single forecast of the form
\[ \text{\textit{``A $p$ fraction of the next $N$ bits will be 1s.''}}  \]
We are allowed to choose $p, N$, and the time at which we make our forecast.

Fixing a binary sequence $x$, we say that a forecast described by $(p, N)$, and made after viewing $x_t$, is \emph{$\eps$-successful} on $x$ if the fraction of 1s among $x_{t+1}, \ldots, x_{t + N}$ is in the range $(p - \eps, p + \eps)$.
For $\delta, \eps > 0$, we say that a (randomized) forecasting strategy $\mathcal{S}$ is \emph{$(\delta, \eps)$-successful} if for every $x \in \seq$,
\[ \Pr[\mathcal{S}\text{ is }\eps\text{-successful on $x$} ] \geq 1 - \delta.\]

In Section~\ref{forecastsec}, we show the following, perhaps surprising, result:

\begin{theorem}\label{thm:denspred} For any $\delta, \eps > 0$, there exists a $(\delta, \eps)$-successful forecasting strategy.
\end{theorem}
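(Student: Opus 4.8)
The plan is to make a \emph{persistence} forecast: I will (randomly) single out a future time‑block $R$ of some length $2^k$ together with the immediately preceding block $L$ of the same length, and once $L$ has been observed I forecast that the density of $1$s over $R$ equals the observed density over $L$. To make this succeed against an arbitrary $x$ I need, for every $x$, to be able to hit a pair $(L,R)$ whose densities differ by less than $\eps$ with probability at least $1-\delta$; I would obtain this from a dyadic decomposition together with a martingale square‑function estimate.

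Concretely, fix a large integer $K=K(\delta,\eps)$ and look at the binary subdivision tree of the block $[1,2^K]$: the level-$k$ nodes are the dyadic blocks of length $2^k$, and the children of a level-$(k+1)$ block are its left and right halves. Write $v(B)$ for the fraction of $1$s of $x$ inside $B$; the density of a node is the average of the densities of its two children. I would pick a uniformly random leaf $j_0\in\{1,\dots,2^K\}$ and follow its ancestor chain $[1,2^K]=B_K\supset B_{K-1}\supset\cdots\supset B_0=\{j_0\}$. The key observation is that $v(B_K),v(B_{K-1}),\dots,v(B_0)$ is a bounded martingale (given $B_{k+1}$, the child $B_k$ is an unbiased choice of one of the two halves, whose densities average to $v(B_{k+1})$), starting at $V:=v([1,2^K])$ and ending at $x_{j_0}$, which is Bernoulli$(V)$. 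Orthogonality of martingale increments then gives
\[
\sum_{k=0}^{K-1}\E\big[(v(B_k)-v(B_{k+1}))^2\big]=\E\big[(x_{j_0}-V)^2\big]=V(1-V)\le\tfrac14 .
\]
Hence, drawing in addition a uniformly random level $k\in\{0,\dots,K-1\}$ (independently), the random jump $\Delta:=|v(B_k)-v(B_{k+1})|$ has $\E[\Delta^2]\le 1/(4K)$, so $\Pr[\Delta\ge\eps/2]\le 1/(K\eps^2)$ by Markov's inequality; I would take $K:=\lceil 1/(\delta\eps^2)\rceil$ to make this at most $\delta$.

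Finally I would turn a pair with small $\Delta$ into an admissible forecast. Let $L,R$ be the left and right halves of $B_{k+1}$; since $v(L)+v(R)=2v(B_{k+1})$ and $B_k$ (being one of $L,R$) is equidistant in density from $B_{k+1}$, we get $|v(R)-v(L)|=2|v(B_{k+1})-v(B_k)|=2\Delta$. So: having drawn $k$ and the length-$2^{k+1}$ dyadic block $B_{k+1}$ (equivalently, a uniformly random such block, since all have equal length), I wait until time $t:=\max(L)$, at which instant all of $L$ has been seen, and forecast ``a $p$ fraction of the next $N$ bits will be $1$s'' with $p:=v(L)$ and $N:=2^k$. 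The next $2^k$ bits are exactly $R$, so the forecast is $\eps$-successful iff $|v(R)-v(L)|=2\Delta<\eps$, i.e.\ whenever the event $\Delta\ge\eps/2$ fails --- probability at least $1-\delta$. The forecast time and $N$ depend only on my coins, $p$ depends only on bits seen by time $t$, and $t\le 2^K$, $N\le 2^{K-1}$ are finite, so this is a legitimate one-shot forecasting strategy. The main obstacle is conceptual rather than computational: one has to recognize the dyadic-tree/martingale structure that forces some scale to exhibit a small jump; once that is in place, the only extra care needed is to randomize the \emph{scale} (not just the position) so that the single forecast can be scheduled in advance and issued using only already-observed bits.
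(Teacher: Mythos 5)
Your proposal is correct and takes essentially the same route as the paper's proof: the same dyadic tree decomposition of an initial block $[1,2^K]$, the same martingale of densities along a uniformly random root-to-leaf path, the same use of orthogonality of martingale increments to bound the sum of squared jumps, a uniformly random scale to pick out one jump by averaging, Chebyshev/Markov to get the failure probability, and the same translation of a small jump into an $\eps$-accurate forecast about the sibling block. The only substantive difference is that you use the exact identity $\bE[(x_{j_0}-V)^2]=V(1-V)\le 1/4$ rather than the crude bound $(X(n)-X(0))^2\le 1$, which lets you take $K=\lceil 1/(\delta\eps^2)\rceil$ instead of the paper's $n=\lceil 4/(\delta\eps^2)\rceil$; this is a factor-of-$4$ improvement in the depth (so a fourth-power improvement in the interval length $2^K$), but not a change of method.
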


\section{Preliminaries and the Main Theorem}\label{prelimsec}

First we develop a formal basis to state and prove our main result.  $\mathbb{N} = \{1, 2, \ldots\}$ denotes the positive whole numbers.  For $N \in \bN$, $[N]$ denotes the set $\{1, 2, \ldots, N\}$.  $\{0, 1\}^{\omega}$ denotes the set of all infinite bit-sequences $b = (b_1, b_2, \ldots )$.  We will freely refer to any such sequence as a ``car-stream,'' where $b_i = 1$ means ``a car appears during the $i$-th minute.''

A \emph{frog-strategy} (or simply \emph{strategy}) is a collection 
\[ \mathcal{S} = \{\pi_{\cS, b} : b \in \seq  \},\]
where each $\pi_{\cS, b}$ is a probability distribution over $\bN \cup \{\infty\}$.  We require that for all $b = (b_1, b_2, \ldots), b' = (b'_1, b'_2, \ldots)$, and all $i \in \bN$,
\begin{equation}\label{eq:restrict}
(b_1, \ldots, b_{i-1} ) =  (b'_1, \ldots, b'_{i-1} )   \Rightarrow  \pi_{\cS, b}(i) = \pi_{\cS, b'}(i).
\end{equation}
That is, $\pi_{\cS, b}(i)$ depends only on $b_1, \ldots, b_{i - 1}$.  
 
Let us interpret the above definition.  A frog-strategy defines, for each car-stream $b$ and each $i \in \bN$, a probability $\pi_{\cS, b}(i)$ that, when facing the car-stream $b$, the frog will attempt to cross at step $i$.  There is also some probability $\pi_{\cS, b}(\infty)$ that the frog will wait forever without crossing.  Whether it lives or dies, the frog only attempts to cross at most once, so these probabilities sum to 1.  Eq.~(\ref{eq:restrict}) requires that the frog's decision for the $i$-th minute depends only upon what it has seen of the car-stream during the first $(i - 1)$ minutes.  The frog-strategies we analyze in this paper will be defined in such a way that Eq.~(\ref{eq:restrict}) obviously holds.

Given a frog-strategy $\mathcal{S}$, define the success probability
\[   \success(\mathcal{S}, b)  :=  \sum_{i \in \bN: b_i = 0} \pi_{\cS, b}(i)         \]
as the probability that, facing $b$, the frog crosses successfully during some minute when there is no car.  
Similarly, define the \emph{death probability} 
\[\DP(\cS, b) := \sum_{i \in \bN: b_i = 1} \pi_{\cS, b}(i)   = 1 - \success(\cS, b) -   \pi_{\cS, b}(\infty)  \]
as the probability that the strategy $\cS$ leads to the frog being squashed by a car on car-stream $b$.
For a subset $A \subseteq \seq$, define 
\[\success(\mathcal{S}, A) := \inf_{b \in A} \success(\mathcal{S}, b).\]

Let $N_t = N_t(b) := b_1 + \ldots + b_t$.  A car-stream $b$ is called \emph{$\eps$-weakly sparse} if
\[\lim_{s \rightarrow \infty }  \left( \inf_{t \geq s}  N_t/t  \right)  \leq \eps.     \]
We can now formally state our main result:

\begin{theorem}\label{thm:main}  Fix $\eps \in (0, 1)$ and let $A_{\eps-ws} := \{b: b$ is $\eps-$weakly sparse$\}$.  Then for all $\gamma > 0$, there exists a strategy $\mathcal{S}_{\eps, \gamma}$ such that
\[    \success(\mathcal{S}_{\eps, \gamma}, A_{\eps-ws})    > 1 - \eps - \gamma.         \]
Furthermore, $\mathcal{S}_{\eps, \gamma}$ has the following ``safety'' property: for \emph{any} car-stream $b \in \{0, 1\}^\omega$, the death probability $\DP(\cS, b)$ is at most $\eps + \gamma$.
\end{theorem}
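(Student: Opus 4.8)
The plan is to keep the frog passive until the sequence looks sparse and only then cross, at a randomized time and with a ``recheck'' that can abort the crossing. Fix a trigger threshold $\eps' := \eps+\gamma/4$, a small window ratio $\eta$ with $0<\eta<\gamma/(4\eps)$, and a large starting time $T_0=T_0(\eps,\gamma)$. The frog runs through \emph{attempts} $n=1,2,\dots$: attempt $n$ \emph{triggers} at the first time $t$ that is $\ge T_0$ (for $n=1$), or past the end of attempt $n-1$ (for $n>1$), with $N_t/t\le\eps'$; call this time $t_n$. At that moment she draws, independently, a target time $\tau_n$ uniform in the window $W_n:=(t_n,(1+\eta)t_n]$ and a threshold $\Theta_n$ uniform in $[\eps+\gamma/4,\ \eps+\gamma/2]$. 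She then reads $W_n$ bit by bit: if the local density of $1$s, $(b_{t_n+1}+\dots+b_{t'-1})/(t'-1-t_n)$, ever exceeds $\Theta_n$ at some $t'\le\tau_n$, she \emph{aborts} attempt $n$ (crossing nothing, waiting for the next trigger); otherwise she crosses at $\tau_n$. Constraint~(\ref{eq:restrict}) is immediate.

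For the success claim, fix an $\eps$-weakly sparse $b$. There are infinitely many times $s$ with $N_s/s\le\eps$; look at the ``descent'' reaching such an $s$. Since the prefix count is nondecreasing, this descent spans a factor $\ge\eps'/\eps=1+\gamma/(4\eps)>1+\eta$ in time, so the window of the attempt that triggers when the density first drops to $\le\eps'$ along the descent is entirely contained in the descent; and, because the descent must actually reach density $\le\eps$, the local $1$-density it presents is never more than $\approx\eps$, which is $<\eps+\gamma/4\le\Theta_n$. Hence the frog does not abort that attempt: she crosses at $\tau_n$, which lies on a $0$ with probability $\ge 1-\eps-o(1)$. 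As this is bounded away from $0$, the frog cannot abort all of the (infinitely many) such attempts, so $\pi_{\mathcal{S},b}(\infty)=0$. Then $\success(\mathcal{S},b)=1-\DP(\mathcal{S},b)$, and the claimed lower bound on $\success$ follows from the safety bound with room to spare (the threshold is $\le\eps+\gamma/2$, so $\DP\le\eps+\gamma/2+o(1)<\eps+\gamma$).

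The safety claim — $\DP(\mathcal{S},b)\le\eps+\gamma$ for \emph{every} $b$ — is the heart of the matter and the step I expect to be the main obstacle. Two difficulties stand out. (i) A per-attempt estimate is not enough on its own: conditioned on reaching attempt $n$, the death probability in it is only $\lesssim\Theta_n$, and an adversary who engineers repeated aborts could accumulate this over arbitrarily many attempts. The fix is amortized: one argues that the death probability contributed \emph{by} attempt $n$ is at most $(\approx\Theta_n)\bigl(\Pr[\text{reach }n]-\Pr[\text{reach }n{+}1]\bigr)$ — morally because within attempt $n$ the frog can cross on a $1$ only while the running local density is $\le\Theta_n$, and in any stretch of local density $\le\Theta_n$ at most a $\Theta_n$ fraction of positions are $1$s — and this telescopes to $\le(\eps+\gamma/2)\Pr[\text{reach }1]$. (ii) That ``at most a $\Theta_n$ fraction are $1$s'' step is exactly what the adversary attacks: with a \emph{fixed} threshold $\eps''$ he can pin the local density exactly at $\eps''$ (feeding a pattern of period $\approx 1/\eps''$ with a single $1$), so that the recheck passes \emph{only} at the $1$-positions and the frog dies almost surely across the attempts, even though that sequence is far from $\eps$-weakly sparse. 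This is why $\Theta_n$ must be random and freshly drawn each attempt: the adversary cannot pin the density to an unknown continuous value, so for any local density level $\rho$ the expected death contribution is $\le\rho\,\Pr[\Theta_n>\rho]\le\eps+\gamma/2$. Turning this sketch into a proof — tracking how the random $\tau_n$, the random $\Theta_n$, and the adversary's bits interact, and absorbing the $o(1)$ boundary errors (summable since the $t_n$ grow geometrically once $T_0$ is large) — is the delicate calculation alluded to in the introduction.
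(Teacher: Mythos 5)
Your algorithm as described does not solve the problem, and the flaw is in a step you present confidently rather than one you flag as incomplete. The claim ``the local $1$-density it presents is never more than $\approx\eps$'' is false for the window $W_n$, because $W_n$ is only a \emph{prefix} of the descent, and a prefix can have arbitrarily high local density even when the full descent averages to $\le\eps$. Concretely, consider the sequence that is all $0$s up to $T_0$, then repeats forever the pattern: a single $1$ at the bit immediately following each trigger time, $0$s elsewhere. Then each trigger fires roughly at each window's end (prefix density keeps dropping since only one $1$ is added per geometric window), so $N_t/t \to 0$ and the sequence is $\eps$-weakly sparse for \emph{every} $\eps$. But in every attempt $n$, if $\tau_n = t_n+1$ the frog crosses onto the $1$ and dies; otherwise her first recheck at $t' = t_n+2$ sees local density $1 > \Theta_n$ and she aborts. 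She never crosses successfully, so $\success(\cS,b) = 0$, which falsifies your success claim outright. The total death probability $\sum_n 1/|W_n|$ is small (it telescopes geometrically), so this failure is a \emph{success} failure, not a safety failure: the frog simply sits forever with probability $\approx 1$. The same boundary effect --- the recheck cannot protect the very positions on which the frog may land --- also undercuts the amortized inequality $\DP_n \lesssim \Theta_n\bigl(\Pr[\text{reach }n]-\Pr[\text{reach }n+1]\bigr)$ in your safety sketch, which you already note is incomplete, but the success argument is the one that is actually wrong as written.

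The underlying difficulty your approach runs into is that a hard ``abort if local density exceeds a threshold'' decision is binary and brittle: the adversary can plant a tiny number of $1$s at the start of each window, costing almost nothing to the global density, yet poisoning every recheck. The paper's strategy (Lemma~\ref{lem:finite}) avoids this by being \emph{continuous} rather than thresholded: it maintains a stack of chips, adding $p$ chips per $0$ and removing $q$ per $1$, and the probability of crossing at a uniformly random time $t^\star$ is \emph{proportional} to the stack height $H_{t^\star-1}$. A single $1$ lowers the crossing probability only slightly rather than triggering an abort, and the key inequality $\DP \le \frac{\delta}{1-\delta}\success + O(\delta/((1-\delta)K))$ is proved by a per-chip amortization that survives adversarial placement of $1$s. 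If you want to salvage your trigger-and-recheck scheme, you would need to replace the hard threshold $\Theta_n$ with something graded --- essentially reinventing the chip stack or something equivalent. As given, the proposal has a genuine gap and the proposed algorithm provably fails on explicit weakly sparse sequences.
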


Note that $b$ is weakly sparse (as defined in Section~\ref{puzsec}) exactly if it is $\eps$-weakly sparse for all $\eps > 0$.  Thus if $A_{ws} := \{b: b$ is weakly sparse$\}$, then by Theorem~\ref{thm:main}, we can succeed on $A_{ws}$ with probability as close to 1 as we desire.  This solves Puzzle 3.

It is not hard to see that Theorem~\ref{thm:main} is optimal for frog-strategies against $A_{\eps-ws}$.  For consider a randomly generated car-stream $\mathbf{b}$ where the events $[\mathbf{b}_i = 1]$ occur independently, with $\bE[\mathbf{b}_i] = \min \{1, \eps + 2^{-i}\}$.  Then $[\lim_{t \rightarrow \infty} N_t(\mathbf{b})/t = \eps]$ occurs with probability 1.  On the other hand, any frog-strategy $\cS$ has success probability less than $1 - \eps$ against $\mathbf{b}$.  Thus, for any $\cS$ we can find a particular car-stream $b$ for which $\lim_{t \rightarrow \infty} N_t(b)/t = \eps$ and which causes $\cS$ to succeed with probability less than $1 - \eps$.


\section{Proof of the the Main Theorem}\label{mainsec}   

In this section we prove Theorem~\ref{thm:main}.  First we observe that, if we can construct a strategy $\mathcal{S}$ such that $\success(\mathcal{S}, A_{\eps-ws})    > 1 - \eps - \gamma$, then the ``safety'' property claimed for $\mathcal{S}$ in the theorem statement will follow immediately.  For suppose to the contrary that some car-stream $b \in \seq$ satisfies $\DP(\cS, b) > \eps + \gamma$.  Then there exists $m \in \mathbb{N}$ such that $\sum_{i \leq m: b_i = 1} \pi_{\cS, b}(i) > \eps + \gamma$.  If we define $b' \in \seq$ by
\[  b'_i :=  
\left\{
	\begin{array}{ll}
		 b_i & \mbox{if } i \leq m, \\
		0 & \mbox{if } i > m,
	\end{array}
\right. \]
then $b' \in A_{\eps-ws}$ and $\DP(\cS, b') > \eps + \gamma$, contradicting our assumption on $\mathcal{S}$.

To construct the strategy $\mathcal{S}$, we use a family of frog-strategies for attempting to cross the road within a finite, bounded interval of time.  The following lemma is our key tool, and is interesting in its own right.

\begin{lemma}\label{lem:finite} For any $\delta \in (0, 1)$ and integer $K > 1$, there exists a strategy $\cT = \cT_{K, \delta}$ such that for all $b \in \seq$:
\begin{itemize} 

\item[(i)] The crossing time of $\cT$ is always in $[K] \cup \{\infty\}$.  That is, for $K < i < \infty$, we have $\pi_{\cT, b}(i) = 0$;

\item[(ii)] If $(b_1 + \ldots + b_{K-1})/(K-1)  \leq \delta' < \delta$, then $\pi_{\cT, b}(\infty)  \leq 1 - \Omega((\delta - \delta')^2/\delta)$;  

\item[(iii)] The death probability satisfies
\[\DP(\cT, b) \leq \frac{\delta}{1 - \delta} \success(\cT, b) + O\left( \frac{\delta}{(1 - \delta)K} \right).\]
\end{itemize} 
\end{lemma}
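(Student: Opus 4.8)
\noindent The plan is to construct $\cT_{K,\delta}$ by a two‑parameter randomization keyed to the ``surplus of quiet minutes.'' Set $M_j := \delta j - N_j$, so $M_0 = 0$ and $M_j - M_{j-1} = \delta - b_j$; note $0 \le M_j^{+} \le \delta K < K$, where $x^{+} := \max(x,0)$. The frog samples $t^{\star}$ uniformly from $[K]$ and, independently, a threshold $\theta$ uniformly from $[0,K]$; at minute $t^{\star}$ she crosses if $M_{t^{\star}-1} \ge \theta$, and otherwise never crosses. The causality requirement~\eqref{eq:restrict} is clear since the decision at minute $i$ uses only $t^{\star},\theta$ and $b_1,\dots,b_{i-1}$, and property~(i) is immediate as the frog acts only at minute $t^{\star}\le K$. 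Since $\Pr[\theta \le x] = x^{+}/K$ for $x \le K$, the crossing probability at minute $i$ is exactly $\pi_{\cT,b}(i) = M_{i-1}^{+}/K^{2}$; summing, $\success(\cT,b) = K^{-2}\sum_{i:b_i=0} M_{i-1}^{+}$ and $\DP(\cT,b) = K^{-2}\sum_{i:b_i=1} M_{i-1}^{+}$, and $\sum_i \pi_{\cT,b}(i) \le \delta/2 < 1$, so this is a legitimate strategy.

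\noindent For~(ii), assume $N_{K-1}\le \delta'(K-1)$; I must lower‑bound $\sum_{j=0}^{K-1} M_j^{+}$. Any admissible count sequence with final value $\le \delta'(K-1)$ is dominated pointwise by $N_j = \min\{\,j,\ \lfloor \delta'(K-1)\rfloor\,\}$, so $\sum_j M_j^{+}$ is minimized by the stream with $\lfloor\delta'(K-1)\rfloor$ leading $1$'s followed by all $0$'s. For that stream $M_j^{+}$ vanishes up to $j \approx \delta'(K-1)/\delta$ and then grows linearly, and a routine integral estimate gives $\sum_j M_j^{+} = \Omega\!\big((\delta-\delta')^{2}K^{2}/\delta\big)$; hence $\sum_i \pi_{\cT,b}(i) = \Omega\!\big((\delta-\delta')^{2}/\delta\big)$, which is~(ii). (The $\Theta(K^2)$ main term dominates the $O(K)$ discretization error.)

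\noindent The crux is~(iii). After rearrangement it suffices to show $(1-\delta)\DP(\cT,b) - \delta\,\success(\cT,b) \le O(\delta/K)$, and the left side equals $K^{-2}\sum_i M_{i-1}^{+}(b_i-\delta) = -K^{-2}\sum_{i=1}^{K} M_{i-1}^{+}(M_i - M_{i-1})$. The plan is a discrete fundamental theorem of calculus: the elementary pointwise inequality $a^{+}(b-a) \ge \tfrac12\big((b^{+})^{2}-(a^{+})^{2}\big) - \tfrac12(b-a)^{2}$ (valid for all reals, with the quadratic error needed only when $a>0$ or $b>0$) telescopes to $\sum_i M_{i-1}^{+}(M_i-M_{i-1}) \ge \tfrac12(M_K^{+})^{2} - \tfrac12\sum_{i\in G}(\delta-b_i)^{2}$, where $G = \{i : M_{i-1}>0 \text{ or } M_i>0\}$. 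It then remains to prove $\sum_{i\in G}(\delta-b_i)^{2} = O(\delta K)$. The decisive observation — which I expect to be the main obstacle to pin down cleanly — is that almost no \emph{car}‑minute lies in $G$: if the $k$‑th car in time order occurs at minute $j_k$, then $M_{j_k-1}>0$ forces $j_k > 1 + (k-1)/\delta$, and since $j_k \le K$ this can happen only for $k \le \lceil \delta(K-1)\rceil$. Thus at most $O(\delta K)$ car‑minutes are in $G$ (each of weight $(1-\delta)^2 \le 1$); the zero‑minutes in $G$ with $M_{i-1}>0$ contribute at most $\delta^{2}K$; and each zero‑minute at which $M$ crosses from $\le 0$ to $>0$ is immediately preceded in $G$‑structure by a down‑crossing, which is one of those same rare positive‑surplus cars, so there are only $O(\delta K)$ of these as well, contributing $O(\delta^{3}K)$. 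Summing gives $\sum_{i\in G}(\delta-b_i)^{2} = O(\delta K)$, hence $(1-\delta)\DP - \delta\,\success \le O(\delta/K)$, and dividing by $1-\delta$ yields~(iii) in the stated form.

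\noindent A remark on delicacy: the behavior at small $i$ is handled automatically (here $M_{i-1}^{+}$ is $0$ or $O(\delta)$ for small $i$, so $\pi_{\cT,b}(1)=0$ and the first minutes contribute no spurious $\Omega(1/K)$ death term). The one genuinely tight point is the $O(\delta K)$ error bound in~(iii): the naive estimate $\sum_i (\delta-b_i)^{2} \le K$ gives only $(1-\delta)\DP - \delta\,\success \le O(1/K)$, weaker than needed by a factor of $\delta$, so the counting lemma restricting ``dangerous'' positive‑surplus car‑minutes to $O(\delta K)$ is indispensable — this is the heart of the argument, and where I would spend the most care.
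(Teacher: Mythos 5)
Your construction takes a genuinely different route from the paper's. You set the crossing probability proportional to the positive part $M_{i-1}^{+}$ of the \emph{unreflected} surplus walk $M_j := \delta j - N_j$, whereas the paper's ``chip stack'' height is the walk \emph{reflected} at zero: the stack starts regrowing the moment it is emptied, while your $M^{+}$ stays at $0$ until the raw walk climbs back above zero, so the two strategies really do differ (e.g.\ after a leading $1$ followed by a $0$, the paper's frog already has positive crossing probability but yours does not). Correspondingly, your analysis of~(iii) is a summation-by-parts telescope of the quantities $(M_i^{+})^2$, while the paper does a chip-by-chip accounting of the contribution to $\DP - \frac{\delta}{1-\delta}\success$. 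Your route is clean, sidesteps the paper's reduction to rational $\delta$, and your verifications of~(i) and~(ii) are correct --- the pointwise domination argument (all cars pushed to the front minimizes $\sum_j M_j^{+}$) is exactly right.

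There is, however, a quantitative gap in the final counting step for~(iii). You bound the number of car-minutes in $G$ by $\lceil\delta(K-1)\rceil$ and assert this is $O(\delta K)$, but $\lceil\delta(K-1)\rceil \le \delta K + 1$, and that ``$+1$'' car contributes $(1-\delta)^2 = \Theta(1)$ to $\sum_{i\in G}(\delta-b_i)^2$. So what your telescope actually yields is $(1-\delta)\DP - \delta\,\success \le O(\delta/K) + O(1/K^2)$, and the $1/K^2$ term is not $O(\delta/K)$ once $\delta K \to 0$. Since Lemma~\ref{lem:finite} is asserted for \emph{all} $\delta \in (0,1)$ and integers $K>1$ with a uniform implied constant, this must be fixed. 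The fix is short: if $\delta(K-1) < 1$ then at most one car-minute $i$ has $M_{i-1}>0$, and for that minute $M_{i-1}\le \delta(K-1)$, so $\DP \le \delta(K-1)/K^2 < \delta/K$ directly and~(iii) holds trivially; alternatively, the exact identity $a^{+}(b-a) = \tfrac12\big((b^{+})^{2}-(a^{+})^{2}\big) - \tfrac12(b-a)^{2} + \tfrac12\min(b,0)^{2}$ (valid for $a\ge 0$) recovers the missing factor of $\delta$ for down-crossing cars. This corner case never arises in the proof of Theorem~\ref{thm:main}, where $\delta = \eps_2$ is held fixed and $K$ is taken large, so the gap is harmless downstream; with either patch your proof of the lemma is complete.
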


We defer the proof of Lemma~\ref{lem:finite}, and use it to prove Theorem~\ref{thm:main}.

\begin{proof}[Proof of Theorem~\ref{thm:main}]  Fix settings of $\eps, \gamma > 0$; we may assume $\eps + \gamma < 1$, or there is nothing to prove.  Let $\eps_1 := \eps + \gamma/3, \eps_2 := \eps + 2\gamma/3$.  
We also use a large integer $K > 1$, to be specified later.
Divide $\bN$ into a sequence of intervals $I_1 = \{1, 2, \ldots, K\}, I_2 = \{K+1, \ldots, 5K\}$, and so on, where $I_r$ has length $r^2 K$.

Let $\cS = \cS_{\eps, \gamma}$ be the frog-strategy which does the following: first, follow the strategy $\cT_{K, \eps_2}$ (as given by Lemma~\ref{lem:finite}) during the time interval $I_1$.  If no crossing is attempted during these steps, then run the strategy $\cT_{4K, \eps_2}$ on the interval $I_2$, after shifting the indices of $I_2$ appropriately (so that $\cT_{4K, \eps_2}$ considers its input sequence to begin on $b_{K+1}$).  
Similarly, for each $r > 0$, if we reach the interval $I_r$ without an attempted crossing, we execute the strategy $\cT_{r^2K, \eps_2}$ on the interval $I_r$, after shifting indices appropriately.

We will show that if $K$ is sufficiently large, we have $\success(\cS, A_{\eps-ws})   >   1 - \eps - \gamma$ as required.  
Fix any $b = (b_1, b_2, \ldots) \in A_{\eps-ws}$.
Let $\alpha_r  := \left(\sum_{i \in I_r}b_i\right)/|I_r| $ be the fraction of 1-entries in $b$ during interval $I_r$.

\begin{claim}\label{ezclaim} For infinitely many $r$, $\alpha_r \leq \eps_1$. 
\end{claim}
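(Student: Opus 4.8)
The plan is to derive Claim~\ref{ezclaim} directly from the $\eps$-weak sparsity of $b$ together with the geometric growth of the interval lengths. First I would recall what $b \in A_{\eps-ws}$ gives us: there are infinitely many $t$ with $N_t/t \leq \eps + \gamma/6$ (say), i.e. with $N_t \leq (\eps + \gamma/6)\, t$. Indeed, since $\liminf_{t} N_t/t \leq \eps < \eps + \gamma/6$, for every $s$ there is some $t \geq s$ with $N_t \leq (\eps+\gamma/6)t$. I would then fix one such $t$ that is large (to be chosen below) and locate the interval $I_r$ containing $t$, or more conveniently work with the prefix of intervals $I_1 \cup \cdots \cup I_r$ for the largest $r$ with $\sum_{j \le r} |I_j| \le t$.

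The key quantitative point is that the last interval is not too small a fraction of the whole prefix. Since $|I_j| = j^2 K$, we have $\sum_{j=1}^{r} |I_j| = K\sum_{j=1}^r j^2 = \Theta(r^3 K)$, while $|I_r| = r^2 K$; so $|I_r|$ is a $\Theta(1/r)$ fraction of the cumulative length through $I_r$. That ratio shrinks, so I would instead argue by contradiction and amortization: suppose $\alpha_r > \eps_1$ for all sufficiently large $r$, say all $r \geq r_0$. Then for any $R > r_0$, the number of 1s in $I_{r_0+1} \cup \cdots \cup I_R$ is at least $\eps_1 \sum_{j=r_0+1}^{R}|I_j| = \eps_1\bigl(\Theta(R^3 K) - \Theta(r_0^3 K)\bigr)$. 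Setting $t_R := \sum_{j=1}^R |I_j| = \Theta(R^3K)$, this forces $N_{t_R}/t_R \geq \eps_1 - o(1)$ as $R \to \infty$, so $\liminf_t N_t/t \geq \eps_1 = \eps + \gamma/3 > \eps$, contradicting $b \in A_{\eps-ws}$. Hence $\alpha_r \leq \eps_1$ for infinitely many $r$.

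I should be slightly careful that the $\liminf$ is taken along the specific subsequence $t_R$ of interval endpoints; that is fine, since $\liminf$ over all $t$ is at most the $\liminf$ along any subsequence, so a lower bound along $t_R$ still contradicts the hypothesis. The one technical nuisance — the main (minor) obstacle — is making the amortization clean despite the ``$o(1)$'' coming from the $r_0^3K$ head term and from the $\Theta(\cdot)$ in $\sum j^2 = r^3/3 + O(r^2)$; but since $r_0$ is fixed once we assume $\alpha_r > \eps_1$ eventually, and we send $R \to \infty$, the head contributes a vanishing fraction of $t_R$ and the error terms in $\sum_{j=1}^R j^2$ are lower order, so the bound $N_{t_R}/t_R \to \eps_1$ (or above) goes through. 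This is really the only place any computation is needed, and it is routine.
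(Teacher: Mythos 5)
There is a genuine gap in the step where you pass from a bound along the subsequence $t_R = \sum_{j\le R}|I_j|$ to a statement about $\liminf_t N_t/t$. You correctly derive $N_{t_R}/t_R \geq \eps_1 - o(1)$ for the interval endpoints, and then assert this forces $\liminf_t N_t/t \geq \eps_1$. But the fact you invoke to justify restricting to the subsequence — ``$\liminf$ over all $t$ is at most the $\liminf$ along any subsequence'' — points in exactly the unhelpful direction: it lets you transfer an \emph{upper} bound from the subsequence to the full sequence, not a \emph{lower} bound. Concretely, a lower bound on $\liminf_R N_{t_R}/t_R$ tells you nothing directly about $\liminf_t N_t/t$, since the latter could be small due to indices $t$ that are not of the form $t_R$. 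This matters because the hypothesis $b \in A_{\eps-ws}$ only guarantees infinitely many $t$ (arbitrary integers, not necessarily interval endpoints) with $N_t/t$ near $\eps$.

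The gap is fixable, and the fix is cheap: for $t_R \leq t < t_{R+1}$ one has $N_t \geq N_{t_R}$ and $t \leq t_{R+1}$, so $N_t/t \geq (N_{t_R}/t_R)\cdot(t_R/t_{R+1})$, and since $t_{R+1}/t_R = 1 + |I_{R+1}|/t_R = 1 + O(1/R) \to 1$, the lower bound extends to all large $t$. The paper avoids the issue by working with an arbitrary large prefix $[M]$ from the start and writing $N_M/M$ as a convex combination of the $\alpha_r$ over full intervals plus one partial interval $[M]\cap I_{t+1}$, then showing the partial interval's weight is $O(1/t)\to 0$. Both routes lean on the same quantitative fact (polynomially growing interval lengths make boundary effects negligible), but the paper's bookkeeping directly bounds $N_M/M$ for every large $M$, whereas your write-up as it stands only does so for the endpoints and then misattributes the extension to a $\liminf$ comparison that does not apply.
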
 
\begin{proof} Suppose to the contrary that $\alpha_r > \eps_1$ when $r \geq R$.  Consider an interval $\{1, 2, \ldots, M\}$ large enough to properly contain $I_1, I_2, \ldots, I_R$.  Let $t \geq R$ be such that $I_t \subseteq [M]$ but that $I_{t+1} \nsubseteq [M]$.  Let $\alpha^{\star}$ be the fraction of 1-entries in $[M] \cap I_{t+1}$; we set $\alpha^\star := 0$ if $[M] \cap I_{t+1} = \emptyset$.  With $N_M = (b_1 + \ldots + b_M)$, we have the expression
\[\frac{N_M}{M}  = \sum_{r \leq t}  \frac{|I_{r}|}{M}  \cdot  \alpha_{r}    +  \frac{|[M] \cap I_{t+1}|}{M}    \cdot \alpha^\star \]
giving the car-density (fraction of 1s) of $b$ in $[M]$ as a weighted average of the car-densities in $I_1, \ldots, I_t$ and in $[M] \cap I_{t+1}$.  

Note that  
\[\displaystyle\frac{|[M] \cap I_{t+1}|}{M}  \leq \frac{|I_{t+1}|}{M}   \leq  \frac{(t + 1)^2K}{\sum_{r \leq t} r^2 K} = O(1/t) \rightarrow 0,\]
as $M \rightarrow \infty$.  Now $\alpha_r > \eps_1$ when $r \geq R$, so for sufficiently large $M$ we have $N_M/M \geq (\eps_1 + \eps)/2 > \eps$.  But this contradicts the fact that $b \in A_{\eps-ws}$, proving the Claim.
\end{proof}

Fix $r > 0$.  If $I_r = \{j, \ldots, k\}$ and $\alpha_r = (b_j + \ldots + b_k)/(k - j + 1) \leq \eps_1$, then we also have $(b_j + \ldots + b_{k - 1})/(k - j ) < \eps + \gamma/2$ if $r$ is large enough.  For any such $r$, condition (ii) of Lemma~\ref{lem:finite} tells us that if our frog-strategy reaches the interval $I_r$, it will attempt to cross during $I_r$ with probability $\Omega((\gamma/6)^2/\eps_2)$.  There are infinitely many such $r$, by Claim~\ref{ezclaim}.  Thus, the frog-strategy eventually attempts to cross with probability 1.  It follows that $\DP(\cS, b)  = 1 -\success(\cS, b)$.

For $r > 0$, let $P_r = P_r(b)$ be defined as the probability that $\cS$ reaches $I_r$ without attempting to cross earlier.  Let $b[I_r]$ denote the sequence $b$, shifted to begin at the first bit of $I_r$.  Then we can reexpress the death probability of $\cS$ on $b$, and bound this quantity, as follows:
{\allowdisplaybreaks \begin{align*}
\DP(\cS, b) &= \sum_{r \geq 1} P_r \cdot \DP(\cT_{r^2K, \eps_2}, b[I_r]) \\
&\leq  \sum_{r \geq 1} P_r \cdot \left(  \frac{\eps_2}{1 - \eps_2} \success(\cT_{r^2K, \eps_2}, b[I_r]) + O\left( \frac{\eps_2}{(1 - \eps_2)r^2 K} \right)   \right)\\
&\text{(by condition (iii) of Lemma~\ref{lem:finite})}\\
&= \frac{\eps_2}{1 - \eps_2}\left(\sum_{r \geq 1} P_r \cdot \success(\cT_{r^2K, \eps_2}, b[I_r]) \right)  + O\left( \frac{ \eps_2}{(1 - \eps_2)K} \right)\\
&\text{(using the fact that $\sum_{r > 0} r^{-2} < \infty$)}\\
&= \frac{\eps_2}{1 - \eps_2} \success(\cS, b) + O\left( \frac{\eps_2}{(1 - \eps_2)K} \right).
\end{align*}    }
Thus, $\DP(\cS, b) =  1 -\success(\cS, b) \leq \frac{\eps_2}{1 - \eps_2} \success(\cS, b) + O\left( \frac{\eps_2}{(1 - \eps_2)K} \right)$, which implies
\[   \success(\cS, b)  \geq 1 - \eps_2 - O\left( \frac{\eps_2}{(1 - \eps_2)K} \right) = 1 - (\eps + 2\gamma/3) - O\left( \frac{\eps_2}{(1 - \eps_2)K} \right).         \]
By setting $K \gg \eps_2 \gamma^{-1}(1 - \eps_2)^{-1}$ sufficiently large, we can conclude $\success(\cS, b)  > 1 - \eps - \gamma$, where the slack in the inequality is independent of the choice of $b \in A_{\eps-ws}$.  This proves Theorem~\ref{thm:main}.
\end{proof}

\begin{proof}[Proof of Lemma~\ref{lem:finite}]  
By an easy approximation argument, it suffices to prove the result for the case when $\delta$ is rational.  So assume 
\[\delta = p/d,\]
for some integers $0 < p < d$, and let
\[ q := d - p.\]
The frog-strategy $\cT$ is as follows.  First, pick a value $t^\star \in [K]$ uniformly at random.  Do not attempt to cross on steps $1, 2, \ldots, t^\star - 1$.  During this time, maintain an ordered stack of ``chips,'' initially empty.  For $1 \leq i < t^\star$, after viewing $b_i$, if $b_i = 0$ then add $p$ chips to the top of the stack; if $b_i = 1$ then remove $q$ chips from the top of the stack---or, if the stack contains fewer than $q$ chips, remove all the chips.  After this modification to the stack, we say that the bit $b_i$ has been ``processed''.

For $0 \leq i \leq K-1$, let $H_i$ denote the number of chips on the stack after processing $b_1, \ldots, b_i$ (so, $H_0 = 0$).
After processing $b_{t^\star - 1}$, sample from a 0/1-valued random variable $X$, with expectation
\[ \bE[X] :=  \frac{H_{t^\star-1}}{d K} .   \]
(Note that this expectation is at most $\frac{p(K - 1)}{dK} < 1$, so the definition makes sense.)
Attempt to cross at step $t^\star$ if $X = 1$, otherwise make no crossing attempt at any step.  

Note that the variable $H_t$ can be regarded as a measure of the frog's ``courage'' after processing $b_1, \ldots, b_t$, as in our sketch-description in Section~\ref{puzsec}.
We now verify that $\cT$ has the desired properties. 
Condition (i) in Lemma~\ref{lem:finite} is clearly satisfied.
Before verifying conditions (ii) and (iii), we first sketch why they hold.  For (ii), the idea is that if much less than a $\delta$ fraction of $b_1, \ldots, b_{K-1}$ are 1s, then the stack of chips will be of significant height after processing these bits.  Since the stack doesn't grow too quickly, we conclude that the \emph{average} stack height during these steps is significant, which implies that the frog attempts to cross with noticeable probability.

For (iii), the idea is that for any chip $c$, if $c$ stays on the stack for a significant amount of time, then the fraction of 1s appearing during the interval in which $c$ was on the stack must be not much more than $\delta$.  Thus $c$'s contribution to the death probability is not much more than $
\delta/(1 - \delta)$ times $c$'s contribution to the success probability.  On the other hand, chips $c$ which don't stay on the stack very long make only a small contribution to the death probability.

Now we formally verify condition (ii).  Fix some sequence $b$.  First note that the placement and removal of chips, and the height sequence $H_0, \ldots, H_{K - 1}$, can be defined in terms of $b$ alone, without reference to the algorithm's random choices.  Throughout our analysis we consider the stack to continue to evolve as a function of the bits $b_1, \ldots, b_{K-1}$, regardless of the algorithm's choices.

Suppose $b_1 + \ldots + b_{K-1} \leq \delta'(K-1)$, where $\delta' < \delta$; we ask, how large can  $\pi_{\cT, b}(\infty)$ be?  From the definition of $\cT$, we compute
\begin{equation}\label{eq:commit}  \pi_{\cT, b}(\infty) =    1 - \frac{1}{K}  \sum_{t \in [K]}   \frac{H_{t-1}}{d K} = 1 - \frac{1}{dK^2}\sum_{0 \leq t < K}H_t. 
\end{equation}
Now, for a chip $c$, let $m_c \in \bN$ denote the number of indices $i < K$ for which $c$ was on the stack immediately after processing $b_i$.  (We consider each chip to be ``unique;'' that is, it is added to the stack at most once.)  We can reexpress the sum appearing in Eq.~(\ref{eq:commit}) as 
\[  \sum_{0 \leq t < K}H_t  = \sum_c  m_c. \]
We will lower-bound this sum by considering the contribution made by chips that are never removed from the stack---that is, chips which remain after processing $b_{K-1}$.  We call such chips ``persistent.''  First, we argue that there are many persistent chips.  By our assumption, at least $p \cdot (1 - \delta')(K - 1)$ chips are added to the stack in total, while at most $q\cdot \delta' (K - 1)$ chips are ever removed.  Thus the number of persistent chips is at least
\begin{align*}p (1 - \delta')(K -1)    -   q\delta' (K - 1)  &=   p(1 - \delta + (\delta - \delta'))(K - 1)   - q(\delta + (\delta' - \delta))(K - 1) \\
&= [\underbrace{p(1 - \delta) - q\delta}_{= 0}   +   \underbrace{(p + q)}_{= d}(\delta - \delta') ](K - 1) \\
&= (\delta - \delta')d (K-1) ,
\end{align*}
where we used $p/q = \delta/(1 - \delta)$.  Let $J := (\delta - \delta')d (K-1)$.

Pick any $J$ persistent chips, and number them $c(1), \ldots, c(J)$ so that $j' < j \leq J$ implies $c(j')$ appears above $c(j)$ on the stack after processing $b_{K-1}$.  This means $c(j')$ was added to the stack no earlier than $c(j)$, so that $m_{c(j')} \leq m_{c(j)}$.
At most $p$ chips are added for every processed bit of $b$, and if $c(j)$ was added while processing the $(K - i)$-th bit, then $m_{c(j)} = i$.  Thus, by our indexing we conclude $m_{c(j)} \geq   \lceil j/p \rceil \geq j/p$.  Summing over $j$, we obtain
\begin{align*}\sum_{\text{persistent }c}m_c &\geq  \sum_{j = 1}^{J}   j/p   \\
&= \frac{J (J+1)}{2p}\\
&>  \frac{(\delta - \delta')^2d^2 (K-1)^2}{2p}\\\
&= \frac{(\delta - \delta')^2d(K - 1)^2}{2\delta}.
\end{align*}
Finally, returning to Eq.~(\ref{eq:commit}), we compute
\[  \pi_{\cT, b}(\infty)   = 1  - \frac{1}{dK^2} \sum_c  m_c   < 1 - \frac{1}{dK^2} \cdot \frac{(\delta - \delta')^2d(K - 1)^2}{2\delta }  < 1 - \frac{(\delta - \delta')^2}{8\delta},\] 
since $K > 1$.  This establishes condition (ii).

Now we verify condition (iii).  Fix any car-stream $b$. 
From our definitions, we have the expressions
\[      \success(\mathcal{S}, b)  =  \frac{1}{K}\sum_{t \in [K]: b_t = 0}\frac{ H_{t - 1}}{dK}  ,  \quad{} \DP(\mathcal{S}, b)  =  \frac{1}{K}\sum_{t \in [K]: b_t= 1}\frac{ H_{t - 1}}{dK},  \quad{ } \text{and so}    \]
\begin{equation}\label{eq:qty}\DP(\mathcal{S}, b) -  (p/q)\success(\mathcal{S}, b)   =       \frac{1}{dK^2} \left( \sum_{t \in [K]: b_t= 1} H_{t - 1}     -   \sum_{t \in [K]: b_t= 0} (p/q)H_{t - 1}   \right). 
\end{equation}
We regard the quantity $H_{t - 1}$ as being composed of a contribution of 1 from each of the chips on the stack after processing $b_{t - 1}$.  We rewrite the right-hand side of Eq.~(\ref{eq:qty}) as a sum of the total contributions from each chip.  For a chip $c$, and for $z \in \{0, 1\}$, let 
\[n_{c, z} := \left|\{t \in  [K]: b_t = z, \text{ and }c \text{ is on the stack immediately after processing }b_{t - 1}\}\right|.    \] 
We then have
\begin{equation}\label{eq:qty2}\DP(\mathcal{S}, b) -  (p/q)\success(\mathcal{S}, b)   =   \frac{1}{dK^2} \sum_{c} (n_{c, 1} - (p/q)n_{c, 0}).
\end{equation}

Fix attention to some chip $c$, which was placed on the stack while processing the $i_c$-th bit, for some $i_c \in [K - 1]$.  
First assume that $c$ was later removed from the stack, and let $j_c \in [K - 1]$ be the index of the bit whose processing caused $c$ to be removed (thus, $b_{j_c} = 1$).
Then the stack was not empty after processing bits $i_c, \ldots, j_c - 1$, since in particular, the stack contained $c$.  Thus each 1 appearing in $(b_{i_c + 1}, \ldots b_{j_c - 1})$ caused exactly $q$ chips to be removed from the stack.  The removal caused by $[b_{j_c} = 1]$ removes some number $r_c \leq q$ of chips.
Also, each 0 appearing in the same range causes $p$ chips to be added.  Now $n_{c, 0}, n_{c, 1}$ count the number of 0s and 1s respectively among $(b_{i_c +1}, \ldots, b_{j_c})$.  Thus we have
\[  H_{j_c } - H_{i_c}  =  pn_{c, 0}   - q (n_{c, 1} - 1) - r_c \leq pn_{c, 0}   - q (n_{c, 1} - 1) ,   \]
or rearranging,
\begin{equation}\label{nc1_vs_nc0} n_{c, 1} - (p/q)n_{c, 0}      \leq  (H_{i_c} - H_{j_c})/q    + 1 .  \end{equation}
The chip $c$ is added to the stack with $p - 1$ other bits while processing bit $i_c$.  Later, $c$ is removed from the stack when processing bit $j_c$, along with at most $q - 1$ other chips.  Thus we have 
\[H_{i_c} - H_{j_c} \leq p + q - 1,\]
and combining this with Eq.~(\ref{nc1_vs_nc0}) gives
\begin{equation}\label{chipbound1}
n_{c, 1} - (p/q)n_{c, 0} \leq (p + q - 1)/q + 1 < p/q + 2.
\end{equation}

Next suppose $c$ was added after processing bit $i_c \in [K - 1]$, but never removed from the stack.  Then the stack was nonempty after processing bit $i_c$ and remained nonempty from then on, so each 1 in $b_{i_c + 1}, \ldots, b_{K-1}$
caused exactly $q$ chips to be removed.  By reasoning similar to the previous case, we get
\[    n_{c, 1} - (p/q)n_{c, 0}    =  (H_{i_c} - H_{K - 1})/q.\]
Now, $c$ was added along with $p - 1$ other chips after processing $b_{i_c}$, and $c$ remains on the stack after processing $b_{K - 1}$.  It follows that $H_{i_c} - H_{K - 1} \leq p - 1$, so
\begin{equation}\label{chipbound2}
n_{c, 1} - (p/q)n_{c, 0} \leq (p - 1)/q.     
\end{equation}
Plugging Eqs.~(\ref{chipbound1}) and~(\ref{chipbound2}) into Eq.~(\ref{eq:qty2}), we bound
\begin{align*}   \DP(\mathcal{S}, b) -  (p/q)\success(\mathcal{S}, b)  < \frac{1}{dK^2} \sum_{c}(p/q + 2)  &< \frac{p^2/q + 2p}{dK} \\
&\text{(since at most $p(K-1)$ chips are ever used)}\\
&= \frac{1}{K}\left(\frac{p}{q}\cdot \frac{p}{d} + \frac{2p}{d}  \right)\\
&= \frac{1}{K}\left( \frac{\delta}{1 -\delta}\cdot \delta + 2\delta \right)\\
&= O\left( \frac{\delta}{(1 - \delta)K}  \right).
\end{align*}
Since $(p/q) = \delta/(1 - \delta)$, this establishes condition (iii), and completes the proof of Lemma~\ref{lem:finite}.
\end{proof}


\section{More on Bit-Prediction}\label{automsec}

After thinking hard about car-streams and getting across the road safely, our frog had developed a taste for prediction.  In this section we present an extension of Puzzle~\ref{puz3} that is able to predict single bits from significantly more general classes of binary sequences.

\subsection{Bit-prediction algorithms}\label{bitpred_defsec}

Our result concerns the setting in which an observer is asked to correctly predict a single bit of their choice from a sequence $x$.  Unlike the frog crossing problem, in which the frog needed to correctly predict a 0, in this problem the algorithm is allowed to predict either a 0 or a 1.  Thus we need to modify our definition of frog-strategies (in the obvious way), as follows.
A \emph{bit-prediction strategy} is a collection 
\[ \mathcal{S} = \{\pi_{\cS, b} : b \in \seq  \},\]
where each $\pi_{\cS, b}$ is now a probability distribution over $(\bN \times \{0, 1\}) \cup \{\infty\}$.  We require that for all $b = (b_1, b_2, \ldots), b' = (b'_1, b'_2, \ldots)$, and all $i \in \bN, z \in \{0, 1\}$,
\[(b_1, \ldots, b_{i-1} ) =  (b'_1, \ldots, b'_{i-1} )   \Rightarrow  \pi_{\cS, b}((i, z)) = \pi_{\cS, b'}((i, z)). \]
That is, $\pi_{\cS, b}((i, z))$ depends only on $b_1, \ldots, b_{i - 1}$.  As in the frog-crossing setting, our bit-prediction strategies will be defined so that this constraint clearly holds.

Define the success probability
\[   \success^{\text{bit-pred}}(\mathcal{S}, b)  :=  \sum_{i \in \bN} \pi_{\cS, b}((i, b_i))         \]
as the probability that $\mathcal{S}$ correctly predicts a bit of $b$.  For a subset $A \subseteq \seq$, define $\success^{\text{bit-pred}}(\mathcal{S}, A) := \inf_{b \in A} \success^{\text{bit-pred}}(\mathcal{S}, b)$.

\subsection{Finite automata}

To state our result, we need the familiar notion of a \emph{finite automaton} over a binary alphabet.  Formally, this is a 3-tuple $M = (Q, s, \Delta)$, where: 
\begin{itemize}
\item $Q$ is a finite set of states; 
\item $s \in Q$ is the designated \emph{starting state}; 
\item $\Delta: Q \times \{0, 1\} \rightarrow Q$ is the \emph{transition function}.  
\end{itemize}
For $q \in Q$, $B \subseteq Q$, say that $B$ is \emph{accessible from $q$} if there exists a sequence $y_1, \ldots, y_m$ of bits and a sequence $q_0 = q, q_1, \ldots, q_m$ of states, such that
\begin{enumerate}
\item $\Delta(q_i, y_{i+1}) = q_{i+1}$ for $i = 0, 1, \ldots, m - 1$;
\item $q_m \in B$.
\end{enumerate}
Say that $B$ is \emph{strongly accessible} if, for any state $q$ that is accessible from the starting state $s$, $B$ is accessible from $q$.

Finite automata operate on infinite sequences $x \in \{0, 1\}^\omega$ as follows: we let $q_0(x) := s$, and inductively for $t \geq 1$ we define
\[ q_{t}(x) := \Delta (q_{t - 1}(x), x_{t}).\]
We say that $q_t(x)$ is the \emph{state of $M$ after $t$ steps} on the sequence $x$.

For a state $q \in Q$ we define $V_q(x)$, the \emph{visits to $q$ on $x$}, as
\[   V_q(x) := \{ t \geq 0: q_t(x) = q  \}.  \]
Similarly, for $B \subseteq Q$, define $V_B(x)$ as $V_B(x) := \{ t \geq 0: q_t(x) \in B  \}$.

\subsection{Statement of the result}\label{pred_statement}

Say we are presented with the bits of some unknown $x \in \seq$ sequentially.  We assume that $x$ is ``nice'' in the following sense: for some known finite automaton $M$, there is a set $B \subseteq Q$ of ``bad'' states of $M$, which we assume $M$ visits only infrequently when $M$ is run on $x$.  We show that, if $B$ is strongly accessible, we can successfully predict a bit of $x$ with high probability.

First recall the definition of weak sparsity from Section~\ref{puzsec}.  We say that a subset $S \subseteq \{0, 1, 2, \ldots\}$ is weakly sparse if its characteristic sequence is weakly sparse.  We prove:

\begin{theorem}\label{thm:fa_predict}  Let $M = (Q, s, \Delta)$ be a finite automaton, and let $B \subseteq Q$ be a strongly accessible set of states.  Define 
\[A_{B, ws} := \{x \in \seq: V_B(x) \text{ is weakly sparse}  \}. \]
Then for all $\eps > 0$, there exists a bit-prediction strategy $\mathcal{S} = \mathcal{S}_{\eps}$ such that
\[ \success^{\text{\rm{bit-pred}}}(\mathcal{S}, A_{B, ws}) > 1 - \eps. \]
\end{theorem}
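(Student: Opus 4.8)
The plan is to reduce Theorem~\ref{thm:fa_predict} to our main frog result, Theorem~\ref{thm:main} (equivalently, to the solution of Puzzle~\ref{puz3}), by running $M$ on the incoming bits of $x$ and feeding a carefully chosen \emph{derived car-stream} $y = y(x)$ to a frog-strategy; when the frog ``crosses'' at a time $t$ we commit to a bit-prediction for $x_t$ that depends only on the state $q_{t-1}(x)$. Since ``$V_B(x)$ weakly sparse'' means $\eps$-weakly sparse for every $\eps>0$, Theorem~\ref{thm:main} would then deliver success probability $>1-\eps$ for every $\eps$, as required.

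The first ingredient is the distance function. Since $B$ is strongly accessible, every state reachable from $s$ has a finite distance $d(q)\in\{0,1,\dots,D\}$ to $B$, where $D<|Q|$ and $d(q)=0\iff q\in B$; moreover $d(\Delta(q,z))\ge d(q)-1$ for both bits $z$. For each reachable $q$ with $d(q)\ge1$ fix a bit $\beta(q)$ with $d(\Delta(q,\beta(q)))=d(q)-1$ (a ``toward-$B$'' move) and call $1-\beta(q)$ the ``away'' bit. The intended prediction, when the frog acts at time $t$, is ``$x_t$ equals the away bit of $q_{t-1}(x)$.'' At a distance-$1$ state this prediction is correct unless the automaton actually enters $B$ at step $t$ (a member of $V_B(x)$, hence rare), so the naive plan is: feed the frog the car-stream $y_t := [\,q_{t-1}(x)\in B\,]$, which is weakly sparse because $V_B(x)$ is, and predict the away bit whenever the frog crosses on a $0$.

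The main obstacle is that this naive reduction fails, and repairing it is essentially the whole content of the proof. A frog crossing on a $0$ of the above $y$ lands inside a long $B$-free excursion of the run, and an adversary is free, inside such an excursion, to keep taking toward-$B$ moves at every state of distance $\ge2$ (this never enters $B$) --- so the away-bit prediction is only guaranteed at distance-$1$ states, of which an excursion need contain very few (indeed only one, the fatal one at its end, if the adversary so chooses); worse, the frog's ``courage'' accumulates during $B$-free stretches, so it tends to cross near the \emph{end} of an excursion, precisely where the walk is descending toward $B$. Symmetric problems afflict the obvious repairs: a car-stream that is $0$ only at distance-$1$ (or distance-$k$) states is weakly sparse only when the run sits at that level almost all the time, which need not hold. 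So the real task is to build a derived car-stream that is \emph{simultaneously} weakly sparse (so Theorem~\ref{thm:main} applies) and predictive (so a crossing pins down $x_t$). I expect this is done by induction on $D$: one shows that weak sparsity of $V_B(x)$ forces the run, over infinitely many long windows, either to enter $B$ with non-negligible frequency \emph{among its visits to states of distance $\le 1$} --- in which case an inner frog-strategy driven by a genuinely weakly-sparse car-stream, predicting the away bit at distance-$1$ states, succeeds --- or else to spend essentially all its time among states of distance $\ge2$, where it behaves, for long stretches, like a run of the smaller automaton obtained by deleting the distance-$1$ states and promoting the distance-$2$ states to the role of ``bad'' set, so that one recurses. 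Assembling this as nested layers of bounded-time sub-strategies --- in the spirit of the intervals $I_r$ of growing length in the proof of Theorem~\ref{thm:main}, with analogues of properties (i)--(iii) of Lemma~\ref{lem:finite} at each layer --- and checking that the resulting death/failure probabilities telescope (through a convergent series, as with $\sum r^{-2}$ there) and that the depth-$D$ recursion costs only a bounded factor, is the bulk of the work and the step I expect to be hardest.
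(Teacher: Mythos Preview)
Your proposal is not a proof yet: you correctly diagnose that the naive ``away-bit at distance~$1$'' reduction fails, but the repair you sketch (an induction on the maximum distance $D$, with a dichotomy ``either $B$ is hit often among distance-$\le 1$ visits, or the run lives at distance $\ge 2$'') is asserted rather than established. In particular, weak sparsity of $V_B(x)$ does not by itself force either branch of your dichotomy on any fixed scale, and the recursion would have to be interleaved with the $I_r$-style windowing in a way you have not specified; you yourself flag this as ``the bulk of the work.'' So as written there is a genuine gap.

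More importantly, the paper's route is quite different and avoids the recursion entirely. The organizing idea is not distance to $B$ but the \emph{per-state subsequences}: for each state $q$, let $x^{(q)}$ be the sequence of bits read immediately after each visit to $q$. The key structural lemma is that if $B$ is strongly accessible and $V_B(x)$ is weakly sparse, then for some $q\in Q_{\text{inf}}(x)$ the sequence $x^{(q)}$ lies in $A_{ws}\cup A_{co\text{-}ws}$. This is proved by contrapositive: if every $x^{(q)}$ has both $0$s and $1$s with positive lower density, then from any state visited $\Omega(t)$ times both successors are visited $\Omega(t)$ times; iterating $|Q|-1$ steps and using strong accessibility forces some state of $B$ to be visited $\Omega(t)$ times, contradicting weak sparsity. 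Given this lemma, the algorithm is simply to run $\ell=|Q|$ independent copies of a bit-predictor for $A_{ws}\cup A_{co\text{-}ws}$ (obtained from Theorem~\ref{thm:main} by running the frog on $b$ and on $\neg b$ in parallel), feeding the $j$-th copy the subsequence $x^{(p_j)}$, and outputting the first prediction that surfaces. The safety property from Theorem~\ref{thm:main} bounds each copy's error by $\delta$, and the structural lemma guarantees one copy succeeds, giving success $>1-\ell\delta$.

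So the idea you are missing is to partition time by \emph{state} rather than by \emph{distance}, and to exploit that the frog result already handles co-weakly-sparse sequences symmetrically; this replaces your proposed depth-$D$ recursion by a flat union bound over $|Q|$ parallel predictors.
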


We make a few remarks before proving Theorem~\ref{thm:fa_predict}.  
First, simple examples show that the conclusion of Theorem~\ref{thm:fa_predict} can hold even in some cases where $B$ is not strongly accessible.  Finding necessary and sufficient conditions on $B$ could be an interesting question for future study.

Second, it is natural to ask whether a more ``quantitative'' version of Theorem~\ref{thm:fa_predict} can be given.  Let $A_{B, \eps-ws}$ be the set of sequences $x$ for which the characteristic sequence of $V_B(x)$ is $\eps$-weakly sparse (as defined in Section~\ref{prelimsec}).  If $B$ is strongly accessible then, by a slight modification of our proof of Theorem~\ref{thm:fa_predict}, one can derive a bit-prediction strategy $\mathcal{S}$ such that
\[   \success^{\text{bit-pred}}(\mathcal{S}, A_{B, \eps-ws}) > 1 - O\left(\ell \eps^{1/\ell}\right) ,  \]
where $\ell = |Q|$ is the number of states of the automaton $M$.  

Something like this weak form of dependence on $\eps$ is essentially necessary, as can be seen from the following example.  Let $M$ be an automaton with states $Q = \{1, 2, \ldots, \ell\}$, and define 
\[ \Delta(i, 1) := \min\{i+1, \ell\}, \quad{} \Delta(i, 0) := 1. \]
Let $B := \{\ell\}$, and consider running $M$ on a sequence $\mathbf{b}$ of independent unbiased bits.  Then with probability 1, $V_B(\mathbf{b})$ is $2^{-\ell + 1}$-weakly sparse.  On the other hand, no algorithm can predict a bit of $\mathbf{b}$ with success probability greater than $1/2$.

\subsection{Proof of Theorem~\ref{thm:fa_predict}}

Let $A_{ws} \subseteq \seq$ denote the set of weakly sparse sequences.
Given a sequence $x = (x_1, x_2, \ldots)$, define $\neg x := (\neg x_1, \neg x_2, \ldots)$.  Say that $x$ is \emph{co-weakly sparse}, and write $x \in A_{co-ws}$, if $\neg x \in A_{ws}$.
To prove Theorem~\ref{thm:fa_predict}, we need two lemmas.  The following lemma follows easily from Theorem~\ref{thm:main}:   

\begin{lemma}\label{lem:cows}  Given $\delta > 0$, there exists a bit-prediction strategy $\mathcal{P} = \mathcal{P}_{\delta}$ such that
\[   \success^{\text{\rm{bit-pred}}}(\mathcal{P}, A_{ws} \cup A_{co-ws}) > 1 - \delta.    \]
$\mathcal{P}$ also has the ``safety'' property that for any $x \in \seq$, the probability that $\mathcal{P}$ outputs an incorrect bit-prediction on $x$ is at most $\delta$.
\end{lemma}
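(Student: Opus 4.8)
The plan is to combine two independent copies of the frog-strategy of Theorem~\ref{thm:main} --- one reading $x$, the other reading $\neg x$ --- into a single bit-prediction strategy, letting whichever copy attempts to cross \emph{first} dictate the prediction. To set up, fix $\eps := \gamma := \delta/6$ and apply Theorem~\ref{thm:main} to obtain a frog-strategy $\mathcal{S} := \mathcal{S}_{\delta/6,\,\delta/6}$ with the two properties I will use: (a) $\DP(\mathcal{S}, b) \le \delta/3$ for \emph{every} $b \in \seq$; and (b) for every $b \in A_{ws}$ (indeed every $b \in A_{\delta/6-ws}$), the strategy $\mathcal{S}$ attempts to cross with probability $1$, i.e.\ $\pi_{\mathcal{S},b}(\infty) = 0$ --- this last fact is established in the course of the proof of Theorem~\ref{thm:main}.

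Next, define $\mathcal{P}$ as follows. As the bits of $x$ arrive, run a copy $A$ of $\mathcal{S}$ on $x_1, x_2, \ldots$ and, with fresh independent randomness, a copy $B$ of $\mathcal{S}$ on $\neg x_1, \neg x_2, \ldots$; interpret a crossing of $A$ as a prediction of $0$ and a crossing of $B$ as a prediction of $1$. Let $\tau_0, \tau_1 \in \bN \cup \{\infty\}$ be the crossing times of $A$ and $B$ respectively; they are independent, and each of the events $\{\tau_0 = t\}$, $\{\tau_1 = t\}$ depends only on $b_1, \ldots, b_{t-1}$. Strategy $\mathcal{P}$ makes its single prediction at step $\min(\tau_0, \tau_1)$: predict $0$ if $\tau_0 < \tau_1$, predict $1$ if $\tau_1 < \tau_0$, predict $0$ (say) if $\tau_0 = \tau_1 < \infty$, and predict nothing if $\tau_0 = \tau_1 = \infty$. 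This is a legal bit-prediction strategy that predicts at most once.

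For the analysis, the safety property is immediate, since $\mathcal{P}$ predicts $0$ at a step $t$ only when $\tau_0 = t$, so its wrong-$0$-prediction probability is at most $\sum_{t: x_t = 1}\Pr[\tau_0 = t] = \DP(\mathcal{S}, x) \le \delta/3$, and symmetrically (a $1$ of $x$ at step $t$ is a $0$ of $\neg x$) its wrong-$1$-prediction probability is at most $\DP(\mathcal{S}, \neg x) \le \delta/3$; hence $\mathcal{P}$ errs with probability at most $2\delta/3 < \delta$ on every $x \in \seq$. For the success bound, if $x \in A_{ws}$ then property (b) gives $\tau_0 < \infty$ almost surely, so $\min(\tau_0, \tau_1) < \infty$ almost surely and $\mathcal{P}$ makes \emph{some} prediction with probability $1$; combined with the error bound this yields correctness with probability at least $1 - 2\delta/3 > 1 - \delta$. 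If $x \in A_{co-ws}$, run the identical argument with $A$ and $B$ interchanged, using $\neg x \in A_{ws}$ to force $\tau_1 < \infty$ a.s. Thus $\success^{\text{bit-pred}}(\mathcal{P}, A_{ws} \cup A_{co-ws}) > 1 - \delta$.

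The one delicate point --- and the reason for the ``whichever fires first'' rule rather than tossing a coin to commit to $A$ or $B$ --- is that a sequence can be weakly sparse without being co-weakly sparse (or conversely), so one of the two copies may remain essentially silent, and a strategy that commits in advance would then lose roughly a factor of $2$ in its success probability. Running both copies and acting on the earlier crossing avoids this because, on the inputs of interest, one of the copies is guaranteed to fire with probability $1$; and it is precisely the independence of the two copies' internal randomness that lets us pass from that one-sided guarantee to $\min(\tau_0, \tau_1) < \infty$ a.s.
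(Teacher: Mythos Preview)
Your proof is correct and follows essentially the same approach as the paper: run two independent copies of the frog-strategy from Theorem~\ref{thm:main}, one on $x$ and one on $\neg x$, and let whichever fires first dictate the prediction. The only cosmetic differences are that the paper uses $\eps=\gamma=\delta/4$ and bounds $\Pr[\mathcal{P}\text{ correct}]$ directly via $\Pr[\mathcal{S}(b)\text{ correct}]-\Pr[\mathcal{S}(\neg b)\text{ incorrect}]$, whereas you use $\delta/6$ and invoke the fact (from inside the proof of Theorem~\ref{thm:main}) that the frog crosses with probability $1$ on weakly sparse inputs; note that you could avoid this internal citation, since the theorem statement alone already gives $\pi_{\mathcal{S},b}(\infty)<\delta/3$ for $b\in A_{ws}$, which suffices for your union-bound arithmetic.
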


\begin{proof}  First, note that a frog-strategy (as defined in Section~\ref{prelimsec}) can be regarded as a bit-prediction strategy that only ever predicts a 0.  
Let $\eps = \gamma := \delta/4$.  The bit-prediction strategy $\mathcal{P}$, given access to some sequence $b$, simulates the frog-strategy $\mathcal{S}_{\eps, \gamma}$ from Theorem~\ref{thm:main} on $b$, and simultaneously simulates an independent copy of $\mathcal{S}_{\eps, \gamma}$ on $\neg b$.  If $\mathcal{S}_{\eps, \gamma}(b)$ ever outputs a prediction (i.e., that the next bit of $b$ will be 0), $\mathcal{P}$ immediately outputs the same prediction.  On the other hand, if $\mathcal{S}_{\eps, \gamma}(\neg b)$ ever outputs a prediction (that the next bit of $\neg b$ will be 0), then $\mathcal{P}$ predicts that the next bit of $b$ will be 1.  If both simulations output predictions simultaneously, $\mathcal{P}$ makes an arbitrary prediction for the next bit.

To analyze $\mathcal{P}$, say we are given input sequence $b \in A_{ws} \cup A_{co-ws}$.  First suppose $b \in A_{ws}$.  Then $\mathcal{S}_{\eps, \gamma}(b)$ outputs a correct prediction with probability $> 1 - \eps - \gamma$.  Also, by the safety property of $\mathcal{S}_{\eps, \gamma}$ shown in Theorem~\ref{thm:main}, the probability that $\mathcal{S}_{\eps, \gamma}(\neg b)$ outputs an incorrect prediction about $\neg b$ is at most $\eps + \gamma$.  Thus the probability that $\mathcal{P}$ outputs a correct prediction on $b$ is greater than $1 - 2\eps - 2\gamma = 1 - \delta$.

The case where $b \in A_{co-ws}$ is analyzed similarly.  Finally, the safety property of $\mathcal{P}$ follows from the safety property of $\mathcal{S}_{\eps, \gamma}$. \end{proof}

For the next lemma, we need some further definitions.  Fix a finite automaton $M = (Q, s, \Delta)$.  For $x \in \seq$, let 
\[Q_{\text{inf}}(x)  := \{ q \in Q: |V_q(x)| = \infty   \}. \]
Of course, $Q_{\text{inf}}(x)$ is nonempty  since $Q$ is finite.
If $q \in Q_{\text{inf}}(x)$, define a sequence $x^{(q)} \in \seq$ as follows.  If $V_q(x) =  \{t(1), t(2), \ldots, \}$ where $0 \leq t(1) < t(2) < \ldots$, we define
\[  x^{(q)}_i :=  x_{t(i) + 1} .       \]
In words: if $M$ is run on $x$, the $i$-th bit of $x^{(q)}$ records the bit of $x$ seen immediately after the $i$-th visit to state $q$.  If $q \notin Q_{\text{inf}}(x)$, we define $x^{(q)} \in \{0,1\}^\ast$ similarly; in this case, $x^{(q)}_i$ is undefined if $M$ visits state $q$ fewer than $i$ times while running on $x$.

The following lemma gives us a useful property obeyed by sequences $x$ from the set $A_{B, ws}$ (defined in the statement of Theorem~\ref{thm:fa_predict}).

\begin{lemma}\label{lem:structure}  Given $M = (Q, s, \Delta)$, suppose $B \subseteq Q$ is strongly accessible.  If $x \in A_{B, ws}$, then there exists a state $q \in Q_{\text{\rm{inf}}}(x)$ such that
\[    x^{(q)}  \in  A_{ws} \cup A_{co-ws}.    \]
\end{lemma}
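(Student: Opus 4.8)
\textbf{Proof proposal for Lemma~\ref{lem:structure}.}

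The plan is to argue by contradiction using a counting/density argument. Suppose that \emph{no} state works, i.e. for every $q \in Q_{\text{inf}}(x)$ we have $x^{(q)} \notin A_{ws} \cup A_{co-ws}$. The first step is to unpack what this means. If $x^{(q)}$ had only finitely many $1$s then $N_t/t \to 0$ along $x^{(q)}$, so $x^{(q)} \in A_{ws}$; hence $x^{(q)}$ has infinitely many $1$s, and in fact $\liminf_k (\#1\text{s among the first } k \text{ bits of } x^{(q)})/k > 0$. Symmetrically, $x^{(q)} \notin A_{co-ws}$ gives the same for $0$s. Writing $R := Q_{\text{inf}}(x)$ (which is nonempty, and for which $q_t(x) \in R$ for all $t$ past some $T^*$), and noting that the $i$-th bit of $x^{(q)}$ is the bit read right after the $i$-th visit to $q$, we can extract a single constant $\rho > 0$ such that for every $q \in R$ and every value $z \in \{0,1\}$, once $k$ is large enough, at least a $\rho$-fraction of the first $k$ visits to $q$ are immediately followed by the bit $z$. (The only content here is that a finite minimum of positive quantities is positive.)

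The second step uses strong accessibility to organize time into ``distance layers.'' For $q \in Q$, let $d(q)$ be the length of a shortest path in $M$ from $q$ to a state of $B$, so $d(q) = 0$ iff $q \in B$. Every state of $R$ is accessible from $s$ (it lies on the actual run of $M$ on $x$), so strong accessibility forces $d(q)$ to be finite, and hence at most $D - 1$ with $D := |Q|$, for all $q \in R$. Moreover, if $d(q) = \delta \ge 1$ then the first bit $z_q$ of a shortest $q$-to-$B$ path satisfies $d(\Delta(q, z_q)) = \delta - 1$. Partition the time steps $t \ge T^*$ into layers $L_\delta := \{t \ge T^*: d(q_t(x)) = \delta\}$ for $0 \le \delta \le D - 1$; these cover all $t \ge T^*$ since then $q_t(x) \in R$. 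The key estimate is: for $\delta \ge 1$, among the $t \in L_\delta$ with $t \le n$, at least a $\rho$-fraction (up to an additive $O(1)$ coming from finitely many per-state thresholds) satisfy $x_{t+1} = z_{q_t}$; each such $t$ has $t + 1 \in L_{\delta - 1}$, and $t \mapsto t+1$ is injective, so $|L_{\delta-1} \cap [n]| \ge \rho\,|L_\delta \cap [n]| - O(1)$.

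Finally, chain these inequalities. From $|L_\delta \cap [n]| \le \rho^{-1}\big(|L_{\delta-1}\cap[n]| + O(1)\big)$ we get $|L_\delta \cap [n]| \le \rho^{-\delta}|L_0 \cap [n]| + O(1)$ for every $\delta \le D-1$; summing over $\delta$ and using $\sum_\delta |L_\delta \cap [n]| = n - O(1)$ yields $|L_0 \cap [n]| \ge c\,n - O(1)$ with $c := \rho^{\,D-1}/D > 0$, for all large $n$. Since $L_0 \cap [n] \subseteq V_B(x) \cap [n]$, this gives $\liminf_n |V_B(x) \cap [n]|/n \ge c > 0$, contradicting $x \in A_{B,ws}$ (weak sparsity of $V_B(x)$ says precisely that this $\liminf$ equals $0$). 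Hence some $q \in Q_{\text{inf}}(x)$ has $x^{(q)} \in A_{ws} \cup A_{co-ws}$.

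The step I expect to require the most care is the key estimate in the second paragraph: the positive-frequency guarantees coming from the contradiction hypothesis are stated per state, so one must (a) invoke them only for states in $R$ and only for $n$ past all the finitely many relevant thresholds, (b) ensure the ``descent'' steps feeding layer $\delta - 1$ out of layer $\delta$ are counted without double-counting (injectivity of $t \mapsto t+1$), and (c) verify that every index offset --- the cutoff $T^*$, the $+1$ shifts, visits straddling $n$ --- contributes only an additive constant and therefore disappears in the $\liminf$. Everything else is routine.
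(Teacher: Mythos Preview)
Your proof is correct and shares the paper's opening move---argue by contrapositive, and from the hypothesis that every $x^{(q)}$ is neither weakly sparse nor co-weakly sparse, extract a single $\rho>0$ such that for each $q\in R:=Q_{\text{inf}}(x)$ and each $z\in\{0,1\}$, a $\rho$-fraction of the visits to $q$ are followed by the bit $z$. From this point on, however, the two arguments diverge.

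The paper does not introduce a distance function or layers. Instead it defines a state to be \emph{frequent} if it is visited with positive lower density, and shows $R$ equals the set of frequent states by an expansion/iteration argument: by pigeonhole some $q^\star$ is visited $\ge t/\ell$ times among the first $t$ steps; the balance condition then forces both $\Delta(q^\star,0)$ and $\Delta(q^\star,1)$ to be visited $\ge \delta^2 t/\ell$ times; iterating $\ell-1$ times shows every state reachable from $q^\star$ lies in $R$ and is frequent. In particular $R$ is closed under both transitions, so strong accessibility puts a $B$-state in $R$, and that state is frequent---contradiction. Your approach instead organizes time by the shortest-path distance $d(q_t(x))$ to $B$, and uses the balance condition only in the ``toward-$B$'' direction $z_{q_t}$ to get the one-step feed $|L_{\delta-1}\cap[n]|\ge\rho\,|L_\delta\cap[n]|-O(1)$, then chains and sums. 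Your route is a bit more economical for the stated goal (you never need that \emph{every} state in $R$ is frequent, only that $L_0$ is), and the use of $d(\cdot)$ makes the invocation of strong accessibility explicit. The paper's route, on the other hand, yields the stronger intermediate conclusion $F=R$ and the closure of $R$ under transitions, which is conceptually informative even if not needed here. Both arguments are elementary and of comparable length; the care points you flag (thresholds, injectivity of $t\mapsto t+1$, boundary $O(1)$'s) are exactly the right ones and are all benign.
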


\begin{proof}  We prove the contrapositive.  Assume that all $q \in Q_{\text{inf}}(x)$ satisfy $x^{(q)} \notin   A_{ws} \cup A_{co-ws}$; we will show that $x \notin A_{B, ws}$.

Say that a state $q \in Q$ is \emph{frequent (on $x$)} if there exist $\alpha, \beta > 0$ such that for all $T \in \mathbb{N}$,
\[  \left|  V_q(x)  \cap \{0, 1, \ldots, T-1\}   \right|   \geq \alpha T - \beta .        \]
Let $F$ denote the set of frequent states.  Clearly $F \subseteq Q_{\text{inf}}(x)$.  We will show:
\begin{enumerate}
\item $F = Q_{\text{inf}}(x)$;
\item $F$ contains a state from $B$.
\end{enumerate}
Item 2 will immediately imply that $x \notin A_{B, ws}$, as desired.

For each $q \in Q_{\text{inf}}(x)$, our assumption $x^{(q)} \notin   A_{ws} \cup A_{co-ws}$ implies that there is a $\delta_q  \in (0, 1/2)$ and a $K_{q} > 0$ such that for $k \geq K_q$, 
\begin{equation}\label{balance}\delta_q < \frac{1}{k}\left( x^{(q)}_1 + \ldots +  x^{(q)}_k   \right)    < 1 - \delta_q.   \end{equation}
Let $\delta := \min \delta_q$.  Choose a value $T^\star > 0$ such that each $q \in Q_{\text{inf}}(x)$ appears at least $K_q$ times among $(q_0(x), q_1(x), \ldots, q_{T^\star - 1}(x))$.
Choose a second value $R > 0$, such that any $q \notin Q_{\text{inf}}(x)$ occurs fewer than $R$ times in the infinite sequence $(q_0(x), q_1(x), \ldots)$.

Let $\ell = |Q|$.  Fix any $t \in \mathbb{N}$ satisfying 
\[t \geq \max\left\{ \frac{\ell  R}{  \delta^{2(\ell - 1)}} , T^\star \right\}.\]

  By simple counting, some $q^\star \in Q$ occurs at least $t/\ell$ times in $(q_0(x), q_1(x), \ldots, q_{t - 1}(x))$.  We have $t/\ell > R$, so this $q^\star$ must lie in $Q_{\text{inf}}(x)$.
Eq.~(\ref{balance}) then implies that the states $\Delta(q^\star, 0), \Delta(q^\star, 1)$ each appear at least $\delta t/\ell - 1 > \delta^2 t / \ell$ times among $(q_0(x), q_1(x), \ldots, q_{t-1}(x))$.  Now $\delta^2 t / \ell > R$, so we have $\Delta(q^\star, 0), \Delta(q^\star, 1) \in Q_{\text{inf}}(x)$.  

Iterating this argument $(\ell - 1)$ times, we conclude that every state $q$ reachable from $q^\star$ by a sequence of $(\ell - 1)$ or fewer transitions lies in $Q_{\text{inf}}(x)$, and appears at least $\delta^{2(\ell - 1)}t/\ell = \Omega(t)$ times among $(q_0(x), q_1(x), \ldots, q_{t-1}(x))$.  But \emph{every} $q \in Q_{\text{inf}}(x)$ is reachable from $q^{\star}$ by at most $(\ell - 1)$ transitions.  Thus $F =  Q_{\text{inf}}(x)$, proving Item 1 above.

The argument above shows that if $q \in Q_{\text{inf}}(x)$, then $\Delta(q, 0), \Delta(q, 1) \in Q_{\text{inf}}(x)$ as well.  Recall that $B$ is strongly accessible; it follows that $Q_{\text{inf}}(x) \cap B$ is nonempty, proving Item 2 above.  This proves Lemma~\ref{lem:structure}.
\end{proof}

We can now complete the proof of Theorem~\ref{thm:fa_predict}.  Let $Q = \{p_1, \ldots, p_\ell\}$, where $\ell = |Q|$.  We may assume $\ell > 1$, for otherwise $A_{B, ws} = \emptyset$ and there is nothing to show.
  Given $\eps > 0$, let $\delta := \eps /(2\ell)$.  
We define the algorithm $\mathcal{S} = \mathcal{S}_{\eps}$ as follows.  $\mathcal{S}$ runs in parallel $\ell$ different simulations 
\[\mathcal{P}[1], \ldots, \mathcal{P}[\ell]\]
of the algorithm $\mathcal{P}_{\delta}$ from Lemma~\ref{lem:cows}.  $\mathcal{P}[j]$ is run, not on the input sequence $x$ itself, but on the subsequence $x^{(p_j)}$.  To determine which simulation receives each successive bit of $x$, the algorithm $\mathcal{S}$ simply simulates $M$ on the bits of $x$ seen so far.  (Note that, if $p_j \notin Q_{\text{inf}}(x)$, then the simulation $\mathcal{P}[j]$ may ``stall'' indefinitely without receiving any further input bits.)  

Suppose that the simulation $\mathcal{P}[j]$ outputs a prediction $z \in \{0, 1\}$ after seeing the $i$-th bit of $x^{(p_j)}$, and that we subsequently reach a time $t$ such that $q_t(x) = p_j$ is the $(i+1)$-st visit to state $p_j$.  The algorithm $\mathcal{S}$ then predicts that $x_{t+1} = x^{(p_j)}_{i+1} = z$.   

We now analyze $\mathcal{S}$.  Fix any $x \in A_{B-ws}$.  By the safety property of Lemma~\ref{lem:cows}, each $\mathcal{P}[j]$ outputs an incorrect prediction with probability at most $\delta$, so the overall probability of an incorrect prediction is at most $\ell \delta  = \eps/2$.  Also, since $x \in A_{B, ws}$, Lemma~\ref{lem:structure} tells us that there exists a $p_j \in Q_{\text{inf}}(x)$ such that $x^{(p_j)} \in A_{ws} \cup A_{co-ws}$.  Thus, if $\mathcal{P}[j]$ is run individually on $x^{(p_j)}$, $\mathcal{P}[j]$ outputs a correct prediction with probability greater than $1 - \delta$.  We conclude that
\[ \success^{\text{bit-pred}}\left(  \mathcal{S} , x  \right)  > (1 - \delta) - \eps/2 > 1 - \eps, \]
using $\ell > 1$.  This proves Theorem~\ref{thm:fa_predict}.

\section{The Density Prediction Game}\label{forecastsec}

In this section we prove Theorem~\ref{thm:denspred} from Section~\ref{sec:further}.  The proof uses a technique from the analysis of martingales that seems to be folklore; my understanding of this technique benefited greatly from conversations with Russell Impagliazzo.

For any fixed $\delta, \eps$, our prediction strategy will work entirely within a finite interval $(x_1, \ldots, x_T)$ of the sequence $x$.
We note that, to derive a $(\delta, \eps)$-successful strategy over this interval, it suffices to show that for every distribution $\mathcal{D}$ over $\{0, 1\}^T$, there exists a strategy $\mathcal{S}_{\mathcal{D}}$ that is $(\delta, \eps)$-successful when played against $\mathcal{D}$.  This follows from the minimax theorem of game theory, or from the result of Sandroni~\cite{sand03} mentioned in Section~\ref{relatedsec}.  However, this observation would lead to a nonconstructive proof of Theorem~\ref{thm:denspred}, and in any case does not seem to make the proof any simpler.  Thus we will not follow this approach.

Let $\delta, \eps > 0$ be given; we give a forecasting strategy $\mathcal{S} = \mathcal{S}_{\delta, \eps}$ for the density prediction game, and prove that $\mathcal{S}$ is $(\delta, \eps)$-successful.
Set $n := \lceil 4/(\delta \eps^2) \rceil$.  Our strategy will always make a prediction about an interval $x_{a}, \ldots, x_{b}$ where $a \leq b \leq 2^n$. The strategy $\mathcal{S}$ is defined as follows:

\begin{enumerate}
\item Choose $R \in \{1, \ldots, n\}$ uniformly.  Choose $S$ uniformly from $\{1, \ldots , 2^{n - R}\}$.
\item Ignore the first $t = (S - 1) \cdot 2^R$ bits of $x$.  Observe bits $x_{t + 1}, \ldots, x_{t + 2^{R - 1}}$, and let $p$ be the fraction of 1s in this interval.  Immediately after seeing $x_{t + 2^{R - 1}}$, predict:
\[ \text{\textit{``Out of the next $2^{R - 1}$ bits, a $p$ fraction will be 1s.''}}  \]
\end{enumerate}

We now analyze $\cS$.  To do so, it is helpful to describe $\cS$ in a slightly different fashion.  Let us re-index the first $2^n$ bits of our sequence $x$, considering each such bit to be indexed by a string $z \in \{0, 1\}^n$.  We use lexicographic order, so that the sequence is indexed $x_{0^n}, x_{0^{n-1}1}, x_{0^{n - 2}1 0}$, and so on.

Let $T$ be a directed binary tree of height $n$, whose vertices at depth $i$ ($0 \leq i \leq n$) are indexed by binary strings of length $i$; in particular, the root vertex is labeled by the empty string.  If $i < n$ and $y \in \{0, 1\}^i$, the vertex $v_{y}$ has left and right children $v_{y0}, v_{y1}$ respectively.    Each leaf vertex is indexed by an $n$-bit string $z$, and any such vertex $v_z$ is labeled with the bit $x_z$.  

For $y \in \{0, 1\}^\ast$, let $T_y$ denote the subtree of $T$ rooted at $v_y$.  A direct translation of the strategy $\cS$ into our current perspective gives the following equivalent description of $\cS$:
\begin{itemize}
\item[1'.] Choose $R \in \{1, \ldots, n\}$ uniformly.  Starting at the root of $T$, take a directed, unbiased random walk of length $n - R$, reaching a vertex $v_Y$ where $Y \in \{0, 1\}^{n - R}$.

\item[2'.] Observe the bits of $x$ that label leaf vertices in $T_{Y0}$, and let $p$ be the fraction of 1s seen among these bits.  Immediately after seeing the last of these bits, predict:
\[ \text{\textit{``Out of the next $2^{R - 1}$ bits of $x$ (i.e., those labeling leaf vertices in $T_{Y1}$), a $p$ fraction will be 1s.''}}  \]
\end{itemize}

To analyze $\cS$ in this form, fix any binary sequence $x$.  We consider the random walk performed in $\cS$ to be extended to an unbiased random walk of length $n$.  The walk terminates at some leaf vertex $v_{Z}$, where $Z = (z_1, \ldots, z_n)$ is uniform over $\{0, 1\}^n$.

For $0 \leq i \leq n$ and $y \in \{0, 1\}^i$, define
\[\rho(y) :=  2^{i - n} \sum_{w \in \{0, 1\}^{n - i}}x_{yw}   \]
as the fraction of 1s among the labels of leaf vertices of $T_y$.
For $0 \leq t \leq n$, define the random variable 
\[X(t) := \rho(z_1, \ldots, z_t), \]
defined in terms of $Z$, where $X(0) = \rho(\emptyset)$.  The sequence $X(0), \ldots, X(n)$ is a martingale; we follow a folklore technique by analyzing the squared differences between terms in the sequence.
 First, we have $X(t) \in [0, 1]$, so that $(X(n) -  X(0))^2 \leq 1$.  On the other hand, 
\begin{align}\label{dropout}  
\bE[(X(n) -  X(0))^2]  &= \bE\left[ \left( \sum_{0 \leq t < n}(X(t+1) -  X(t))  \right)^2  \right]  \nonumber \\
&= \bE  \left[        \sum_{0 \leq t < n} (X(t + 1) - X(t))^2         \right]  + \bE\left[    2 \sum_{0 \leq s < t < n}(X(s + 1) - X(s))(X(t + 1) - X(t))   \right].
\end{align}
Now, for $0 \leq s < t < n$ and for any outcome of the bits $z_1, \ldots, z_{t}$ (which determine  $X(s)$, $X(s+1)$,\\ and $X(t)$), we have
\begin{align*}  
\bE[(X(t+1) - X(t) )| z_1, \ldots, z_{t}] &=      \bE_{z_{t+1} \in \{0, 1\}}[(\rho(z_1, \ldots, z_{t+1})] - \rho(z_1, \ldots, z_t)  \\
&=  \frac{1}{2}\left[ \rho(z_1, \ldots, z_t, 0) + \rho(z_1, \ldots, z_t, 1)  \right] -\rho(z_1, \ldots, z_t)  \\
&= 0.
\end{align*}
Thus the second right-hand term in Eq.~(\ref{dropout}) is 0, and
\begin{equation}\label{expbound} 
\bE[(X(n) -  X(0))^2]  =       \sum_{0 \leq t < n}\bE \left[( X(t + 1) - X(t))^2         \right]   . 
\end{equation}

Next we relate this to the accuracy of our guess $p$.
Let $p^\ast$ be the fraction of 1s in $T_{Y1}$, i.e., the quantity $\mathcal{S}$ attempts to predict; note that $p^\ast$ and $p$ are both random variables.  From the definitions, we have
\[ p = \rho( Y 0 ), \quad{} p^\ast = \rho (  Y1 ), \quad{} X(n - R) = \frac{1}{2}\left( p + p^\ast \right).    \]
Also, 
\[X(n - R + 1) =  \left\{
	\begin{array}{ll}
		p  & \mbox{if } z_{n - R + 1}  = 0, \\
		p^\ast & \mbox{if } z_{n - R + 1} =  1.
	\end{array}
\right. \]
Thus we have the identity
\begin{align*}
(X(n - R + 1) - X(n - R))^2 &= \frac{1}{4}  \left( p - p^\ast \right)^2.
\end{align*}
Now, $n - R$ is is uniform over $\{0, 1, \ldots, n - 1\}$, and independent of $Z$.  It follows from Eq.~(\ref{expbound}) that
\[\bE[(X(n - R + 1) - X(n - R))^2] = \frac{1}{n}\bE[(X(n) -  X(0))^2] \leq 1/n .\]
Combining, we have
\begin{equation}\label{upbound} \bE [  \left( p - p^\ast \right)^2 ]   \leq  4/n.   
\end{equation}
On the other hand,
\begin{equation}\label{lowbound} \bE [  \left( p - p^\ast \right)^2 ]   \geq  \Pr[|p - p^\ast| \geq \eps] \cdot \eps^2.   
\end{equation}
Combining Eqs.~(\ref{upbound}) and~(\ref{lowbound}), we obtain
\[ \Pr[ |p - p^\ast|   \geq \eps]  \leq  4/(n\eps^2) \leq \delta,    \]
by our setting $n = \lceil 4/(\delta \eps^2) \rceil$.
This proves Theorem~\ref{thm:denspred}.

\section{Questions for Future Work}
\begin{enumerate}
\item Fix some $p \in [1/2,1]$; is there a satisfying characterization of the sets $A \subseteq \{0, 1\}^\omega$ for which some bit-prediction strategy (as defined in Section~\ref{bitpred_defsec}) succeeds with probability $\geq p$ against all $x \in A$?  Perhaps there is a  characterization in terms of some appropriate notion of dimension, analogous to the gale characterizations of Hausdorff dimension~\cite{lutz03} and packing dimension~\cite{ath07}.

\item Could the study of computationally bounded bit-prediction strategies be of value to the study of complexity classes, by analogy to the study of computationally bounded gales in~\cite{lutz03, ath07} and in related work?

\item Find necessary and sufficient conditions on the set $B$ of ``infrequently visited'' states, for the conclusion of Theorem~\ref{thm:fa_predict} (in Section~\ref{pred_statement}) to hold.

\item Our $(\delta, \eps)$-successful forecasting strategy in Section~\ref{forecastsec} always makes a forecast about an interval of bits within $x_1, \ldots, x_{m}$, where $m= 2^{O(\delta^{-1} \eps^{-2})}$.  It would be interesting to know whether some alternative strategy could make forecasts within a much smaller interval---for instance, with $m = \poly(\delta^{-1}, \eps^{-1})$.  It would also be interesting to look at a setting in which the forecaster is allowed to make predictions about sets other than intervals.
\end{enumerate}

\section{Acknowledgements}

I thank John Hitchcock for pointing me to the ignorant-forecasting literature, and Russell Impagliazzo for helpful discussions.  I also thank the many people who tried the puzzles in this paper.

\bibliographystyle{halpha}

\end{document}